\title{Makespan Scheduling of Unit Jobs with Precedence Constraints in $O(1.995^n)$ time}
\author{Jesper Nederlof}{Utrecht University, The Netherlands}{j.nederlof@uu.nl}{https://orcid.org/0000-0003-1848-0076}{Supported by the project CRACKNP that has received funding from the European Research Council (ERC) under the European Union’s Horizon 2020 research and innovation programme (grant agreement No 853234)}
\author{C\'eline M. F. Swennenhuis}{Eindhoven University of Technology, The Netherlands}{c.m.f.swennenhuis@tue.nl}{https://orcid.org/0000-0001-9654-8094}{Supported by the Netherlands Organization for Scientific Research under project no. 613.009.031b.}
\author{Karol W\k{e}grzycki}{Saarland University, Saarbrücken, Germany \\
Max Planck Institute for Informatics, Saarbrücken, Germany}{wegrzycki@cs.uni-saarland.de}{0000-0001-9746-5733}{
Supported by the project TIPEA that has   received funding from the European Research Council (ERC) under the European Unions Horizon 2020 research and innovation programme (grant agreement No. 850979).
}
\authorrunning{J. Nederlof,  C.\,M.\,F. Swennenhuis and K. W\k{e}grzycki} 
\date{}
\newtheorem{problem}{Open Problem}
\newtheorem{reduction}{Reduction Rule}
\newtheorem{property}[theorem]{Property}
\newtheorem{hypothesis}[theorem]{Hypothesis}
\newenvironment{insight}
{\mdfsetup{%
    nobreak=true,
	middlelinecolor=gray,
	middlelinewidth=1pt,
	backgroundcolor=gray!10,
    innertopmargin=7pt,
	roundcorner=5pt}
\begin{mdframed}}
{\end{mdframed}}
\newcommand{\eps}{\varepsilon}
\newcommand{\Oh}{\mathcal{O}}
\newcommand{\Os}{\mathcal{O}^{\star}}
\newcommand{\Otilde}{\widetilde{\mathcal{O}}}
\newcommand{\Ot}{\Otilde}
\newcommand{\nat}{\mathbb{N}}
\newcommand{\Gc}{G^{\mathsf{comp}}}
\newcommand{\sched}{$P\, \vert \, \mathrm{prec}, p_j =1 \vert \, C_{\max}$\,}
\newcommand{\AC}{\#\mathrm{AC}}
\newcommand{\DP}{\mathsf{DP}}
\newcommand{\DKS}{D$\kappa$S\xspace}
\newcommand{\den}{\mathrm{den}_{\kappa}}
\DeclarePairedDelimiter{\iverson}{\llbracket}{\rrbracket}
\newcommand{\dual}[1]{\overleftarrow{#1}}
\newcommand{\Pred}{\textsf{pred}}
\newcommand{\Succ}{\textsf{succ}}
\newcommand{\Sinks}{\textsf{sinks}}
\newcommand{\Sources}{\textsf{sources}}
\newcommand{\depth}[2]{d_{#1}(#2)}
\newcommand{\SM}{\textsf{Imp}} 
\newcommand{\SinkVC}{\textsf{Low}} 
\newcommand{\HighVC}{\textsf{High}} 
\newcommand{\EarlyVC}{\textsf{Early}} 
\newcommandx{\unsure}[2][1=]{\todo[linecolor=green,backgroundcolor=green!25,bordercolor=green,#1]{\normalsize #2}}
\newcommandx{\improvement}[2][1=]{\todo[inline,linecolor=blue,backgroundcolor=blue!05,bordercolor=blue,#1]{\normalsize #2}}
\newcommandx{\info}[2][1=]{\todo[linecolor=yellow,backgroundcolor=yellow!25,bordercolor=yellow,#1]{#2}}
\newcommandx{\floatmodel}[2][1=]{\todo[inline,linecolor=red,backgroundcolor=yellow!25,bordercolor=yellow,#1]{#2}}
\newcommandx{\thiswillnotshow}[2][1=]{\todo[disable,#1]{#2}}
\newcommandx{\celine}[2][1=]{\todo[inline,linecolor=green,backgroundcolor=green!25,bordercolor=green,caption={\normalsize \textbf{Celine}},#1]{\normalsize #2}}
\newcommandx{\karol}[2][1=]{\todo[inline,linecolor=blue,backgroundcolor=blue!25,bordercolor=blue,caption={\normalsize \textbf{Karol}},#1]{\normalsize #2}}
\newcommandx{\jesper}[2][1=]{\todo[inline,linecolor=red,backgroundcolor=red!25,bordercolor=red,caption={\normalsize \textbf{Jesper}},#1]{\normalsize #2}}
\keywords{Scheduling, Makespan, Precedence order, Exact Algorithms, Fixed-Parameter Tractability, Fine-grained Complexity}
\begin{document}

\maketitle
\begin{abstract}
In a classical scheduling problem, we are given a set of $n$ jobs of unit length along with precedence constraints and the goal is to find a schedule of these jobs on $m$ identical machines that minimizes the makespan. This problem is well-known to be NP-hard for an unbounded number of machines. Using standard 3-field notation, it is known as $P|\text{prec}, p_j=1|C_{\max}$.  

We present an algorithm for this problem that runs in $\Oh(1.995^n)$ time. Before our work, even for $m=3$ machines the best known algorithms ran in $\Os(2^n)$ time.  In contrast, our algorithm works when the number of machines $m$ is unbounded. A crucial ingredient of our approach is an algorithm with a runtime that is only single-exponential in the vertex cover of the comparability graph of the precedence constraint graph. This heavily relies on insights from a classical result by Dolev and Warmuth (Journal of Algorithms 1984) for precedence graphs without long chains.
\end{abstract}

\clearpage

\section{Introduction}

Scheduling of precedence constrained jobs on identical
machines is a central challenge in the algorithmic study of scheduling problems. In this problem, we
have $n$ jobs, each one of unit length along with $m$ identical
parallel machines on which we can process the jobs. Additionally, the input contains a
set of \emph{precedence constraints} of jobs; a precedence constraint $j' \prec j$ states that job $j'$ has
to be completed before job $j$ can be started. The goal is to schedule the jobs
non-preemptively so that the \emph{makespan} is minimized. Here, the makespan is
the time when the last job is completed. In the \emph{3-field
notation}\footnote{In the 3-field notation, the first entry specifies the type of available
    machine, the second entry specifies the type of jobs, and the last field is the
    objective. In our case, $P$  means that we have identical parallel machines.
    We use $Pm$ to indicate that number of machines is a fixed constant $m$.
    Second entry $\text{prec}, p_j = 1$ indicates that the jobs have precedence
    constraints and unit length. The last field $C_{\max}$ means that the
objective function is to minimize the completion time.}
of Graham~\cite{graham1969bounds} this problem is denoted as \sched.

Despite the extensive interest in the
community~\cite{coffman1972optimal,Gabow:82:An-almost-linear-algorithm,Sethi:76:Scheduling-graphs}
and plenty of practical applications~\cite{scheduling-practical,scheduling-practical2,scheduling-pracitcal3} the exact complexity of the
problem is still very far from being understood.  Since the '70s, it has
been known that the problem is $\mathsf{NP}$-hard when the number of machines is
the part of the input~\cite{ullman1975np}. 
However, the computational complexity remains unknown even when $m=3$:

\begin{problem}[\cite{GareyJ79}] \label{open:3machines}
    Is $P3 |\text{prec}, p_j=1 | C_{\max}$ solvable in polynomial time? 
\end{problem}

In fact, this is one of the four unresolved open questions from the book by Garey and Johnson~\cite{GareyJ79} and remains one of the most notorious open
question in the area (see,
e.g.,~\cite{mnich2018parameterized,levey-rothvoss,DBLP:journals/orl/BodlaenderF95}).
While papers that solve different special cases of \sched in polynomial time date back to 1961~\cite{hu1961parallel}, substantial progress on the problem was made very recently as well. In particular, a line of research initiated by Levey and Rothvo\ss~\cite{levey-rothvoss, li2021towards,garg2017quasi}, gives a quasi-polynomial approximation scheme for~$Pm |\text{prec}, p_j=1 | C_{\max}$.
In contrast to this, the exact (exponential time) complexity of the general problem has hardly been considered at all, to the best of our knowledge.
We initiate such a study in this paper.

Natural dynamic programming over subsets of the jobs solves the problem in $\Os(2^n\binom{n}{m})$ time, and an obvious question is whether this can be improved.
It is hypothesized that not all problems can be solved strictly faster than $\Os(2^n)$ (where $n$ is some natural measure of the input size): the Strong Exponential Time Hypothesis (SETH) conjectures that $k$-SAT cannot be solved in $\Os(c^n)$ time for any constant $c < 2$. Breaking the $\Os(2^n)$ barrier has been active research over the last years, with results including $\Os((2-\eps)^n)$ algorithms with $\eps > 0$ for
\textsc{Hamiltonian Cycle} in undirected graphs \cite{bjorklund2014determinant}, 
\textsc{Bin Packing} with a constant number of bins (\cite{nederlof2021faster}), and single machine scheduling with precedence constraints minimizing the total completion time \cite{cygan2014scheduling}. We show that
\sched can be added to this list of problems:

\begin{theorem} \label{thm:mainthm}
    \sched admits an $\Oh(1.995^n)$ time algorithm.
\end{theorem}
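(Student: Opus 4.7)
My plan is to prove Theorem~\ref{thm:mainthm} via a dichotomy on the vertex cover number $k := \mathrm{vc}(\Gc)$ of the comparability graph of the precedence poset. A small minimum vertex cover $C$ means $V(\Gc) \setminus C$ is a large antichain of pairwise incomparable jobs, so all precedence structure is concentrated on~$C$; a large vertex cover means the poset contains many comparable pairs. I would design one algorithm tailored to each regime and balance them via a threshold $\alpha$ to obtain $\Oh(1.995^n)$ overall.

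For the small-$k$ regime I would develop a single-exponential algorithm of runtime $\Os(c^k)$. The key restructuring is to index the partial-schedule dynamic program by subsets of $C$ (only $2^k$ of them) rather than by subsets of all $n$ jobs: once we fix which cover jobs have been completed up to time $t$, the set of non-cover jobs available for scheduling is determined, and since these form an antichain among themselves, their assignment to remaining machine slots reduces to a polynomial-time matching/flow subroutine. The Dolev--Warmuth insight enters here because any chain in the poset contributes a clique in $\Gc$, so $\mathrm{vc}(\Gc) \le k$ forces height $\le k+1$; their structural theorem for bounded-height precedence graphs is what justifies that the compressed DP still reaches an optimum.

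For the large-$k$ regime, $k \ge \alpha n$, I would use a separate algorithm that exploits the abundance of precedence constraints. A large vertex cover forces (via K\"onig-type bounds) a matching in $\Gc$ of size $\Omega(n)$, yielding $\Omega(n)$ disjoint comparable pairs $j' \prec j$; each such pair lets one prune all subsets in the standard $2^n$ subset DP that place $j$ before $j'$, which should yield an $\Os((2-\eps)^n)$-type bound. Tuning $\alpha$ so that $c^{\alpha n}$ matches the large-$k$ runtime then gives the claimed $\Oh(1.995^n)$ overall time. The hardest part will be Step~1: showing that compressing the DP down to subsets of $C$ preserves optimality requires a canonical-form argument built on the Dolev--Warmuth bounded-height machinery, and the bookkeeping around scheduling antichain jobs against the fluctuating cover state is where I expect most of the technical effort.
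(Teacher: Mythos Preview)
Your high-level dichotomy on the vertex cover of $\Gc$ matches the paper's strategy, and you are right that the small-cover regime calls for a single-exponential $\Os(c^{|C|})$ algorithm while a large cover forces few antichains. However, your large-$k$ branch has a genuine gap. The ``matching-based pruning'' you describe amounts to bounding the number of downward-closed sets (equivalently, antichains) by $(2-\eps)^n$; indeed the paper gets $\AC \le \Oh(1.9445^n)$ via Dilworth and AM--GM once $|C| > n/7.5$. But bounding $\AC$ alone is \emph{not} enough when $m$ is part of the input: the folklore DP over antichains pays $\binom{n}{m}$ per transition, and for $m = \Theta(n)$ this factor alone is $2^{\Theta(n)}$. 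The paper therefore needs a \emph{three}-way split, not two: small cover (Theorem~\ref{thm:VC}); large cover with small $m$ (the $\Os(\AC \cdot \binom{n}{m})$ DP); and large cover with large $m$, handled by a separate $\Os(\AC + 2^{n-m})$ algorithm that combines subset convolution with a sink-adjusted splitting at the first sink moment. Your proposal gives no mechanism to avoid the $\binom{n}{m}$ blow-up when $m$ is a constant fraction of $n$, so as stated it cannot reach $\Oh(1.995^n)$ for unbounded $m$.

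For the small-$k$ regime, your sketch of a DP over subsets of $C$ plus a matching for $V\setminus C$ is in the right spirit but is underspecified and differs from what the paper actually does. Knowing which cover jobs are completed by time $t$ does \emph{not} determine which non-cover jobs are completed: a non-cover job $v$ has $\Pred(v),\Succ(v)\subseteq C$ and may be scheduled anywhere in the window between the last of $\Pred(v)$ and the first of $\Succ(v)$; your state records neither how many non-cover jobs have been scheduled nor of which successor-type, so the matching subroutine has nothing well-defined to match against. The paper instead uses divide-and-conquer: it guesses a midpoint $T'$ and a $13$-valued \emph{fingerprint} on each cover vertex (left/middle/right, $\HighVC$ vs.\ $\SinkVC$ vs.\ neither, and early vs.\ late in a sink-adjusted schedule), proves structural Properties~\ref{prop:Q1}--\ref{prop:Q3} that pin down each non-sink job to within two sink moments, and only then uses a bipartite matching to place the remaining jobs in $\sigma_L$, $S_{T'}$, or $\sigma_R$. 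This is a substantial extension of Dolev--Warmuth, not a direct invocation, and the recursion halves $|C|$ to give $\Os(169^{|C|})$.
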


Note that~\cref{thm:mainthm} works even when the number of machines is given on
the input. In that case, decreasing the base of the exponent is the best we
can hope for with contemporary techniques.  Namely, any $2^{o(n)}$ algorithm for \sched
would result in unexpected breakthrough for \textsc{Densest $\kappa$-Subgraph}-problem
(see~\cref{sec:LB}) and a $2^{o(n)}$ time algorithm for the Biclique problem~\cite{JansenLK16} on $n$-vertex graphs. 

The starting point of our approach are two previous algorithms for $Pm |\text{prec}, p_j=1 | C_{\max}$. Recall that an (anti-)chain is a set of vertices that are pairwise (in-)comparable.
\begin{itemize}
    \item Algorithm (A): An $\Oh(n^{h(m-1)+1})$ algorithm by Dolev and Warmuth~\cite{dolev1984scheduling}, where $h$ is the maximum length of a chain (called the \emph{height}).
    \item Algorithm (B): An $\Os(\AC \cdot \binom{n}{m})$ time folklore algorithm, where $\AC$ is the number of anti chains (see Theorem~\ref{thm:DP}).
\end{itemize}

Algorithm (B) is a simple improvement of the aforementioned $\Os(2^n\binom{n}{m})$ time algorithm, where the dynamic programming table is indexed by only the elements of a subset that are maximal in the precedence order $\prec$. Algorithm (A) will be described in more detail below.

Intriguingly, Algorithm~(A) and Algorithm~(B) solve very different sets of instances quickly: A long chain cannot contribute much to the number of antichains since a chain and antichain can only overlap in one element. Optimistically, one may hope that a combination of (the ideas behind) these algorithms could make substantial progress on Open Problem~\ref{open:3machines} (by, for example, solving $P3 |\text{prec}, p_j=1 | C_{\max}$ in $2^{o(n)}$). 


In particular, a straightforward consequence of Dilworth's theorem guarantees that $\AC$ is at most $\left( 1 + \frac{n}{a} \right)^{a}$, where $a$ is the cardinality of the largest antichain (see Claim~\ref{claim:nrantichains}).
Focusing on the case when $m$ is a fixed constant, Algorithm (B) runs fast enough to achieve Theorem~\ref{thm:mainthm} whenever $a < 0.97n$. This allows us to assume that the maximum antichain is of size at least $0.97n$ and therefore there are no chains of length more than $h=0.03n$.
Unfortunately, even for constant $m$ this is still not good enough as Algorithm (A) would run in $n^{\Omega(n)}$ time.

However, the above argument gives us a stronger property: If we define $\Gc$ as the \emph{comparability graph}\footnote{The undirected graph with the jobs as vertices and edges between jobs sharing precedence constraints.} of the partial order, then in fact $\Gc$ has a vertex cover\footnote{Recall a vertex cover is a set of vertex that intersects with all edges.} of size at most $0.03n$.
%
%
%
%
Our main technical contribution is that, when we parameterize $\prec$ by the size of the vertex cover of $\Gc$ instead of by $h$, we can get a major improvement in the runtime. In particular, we get an  algorithm with a single-exponential run time and polynomial dependence on $n$ and $m$: 

\begin{theorem} \label{thm:VC}
    \sched admits $\Os(169^{|C|})$ time algorithm where $C$ is a vertex cover of the comparability graph of the precedence constraints.
\end{theorem}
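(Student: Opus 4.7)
The plan exploits two structural consequences of $C$ being a vertex cover of $\Gc$: (i) $I := V \setminus C$ is an antichain in $\prec$, so every chain in $\prec$ contains at most one element of $I$; and (ii) precedences occur only within $C$, or between $C$ and $I$ --- never within $I$. First I would classify each $i \in I$ by its \emph{type} $(P_i, S_i)$ with $P_i := \Pred(i) \cap C$ and $S_i := \Succ(i) \cap C$. Since no $c \in C$ can be simultaneously a predecessor and successor of the same $i$, each cover vertex falls into one of three relationships per $I$-job, giving at most $3^{|C|}$ types; two $I$-jobs of the same type are fully interchangeable.

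The key reduction is that once the time slot $t_c$ of every cover job is fixed, scheduling the remaining $I$-jobs reduces to a transportation problem: for each type $\tau=(P,S)$, the $n_\tau$ jobs of type $\tau$ must be placed in time slots strictly between $\max_{c\in P} t_c$ and $\min_{c\in S} t_c$, with residual capacity per slot equal to $m$ minus the number of cover jobs at that time. This transportation instance is polynomially solvable by standard flow. Hence the whole algorithm reduces to enumerating ``enough'' schedules of $C$ in $\Os(c^{|C|})$ time for a small constant~$c$.

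A direct application of Algorithm~(A) to $C$ alone yields only $n^{O(|C|)}$, which is not FPT, so I would instead use a dynamic program whose state is a pair of subsets $(S, R)$ of $C$: here $S$ is the set of cover jobs already scheduled and $R$ is the set of cover jobs whose $I$-predecessors have all been completed. Transitions advance time by one unit, choose a subset $\Delta \subseteq (C \setminus S) \cap R$ with $\Pred(\Delta)\cap C \subseteq S$ and $|\Delta| \leq m$ to schedule at time $t+1$, commit $I$-jobs (picked by type) to the residual $m - |\Delta|$ slots, and update $R$ according to which cover vertices' $I$-predecessor demands are now fully met. This skeleton has $\Os(4^{|C|})$ states with polynomial transition cost.

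The main obstacle is obtaining the explicit constant $169 = 13^2$ and coordinating the $I$-job placement with the $C$-DP without an $n^{O(|C|)}$ blow-up caused by tracking per-type counts. The cleanest route is to fold the transportation computation into the DP rather than solving it post-hoc, so that at each step the DP not only advances $(S, R)$ but also commits, \emph{by type}, how many $I$-jobs consume each residual capacity unit. A careful branching analysis over each $c \in C$'s ``status'' (whether $c$ has been scheduled, whether its $I$-prerequisites are done, and additional finer classifications encoding its role relative to the time windows that govern the $I$-types adjacent to~$c$) --- a $C$-centric refinement of the Dolev--Warmuth level argument --- is what I would expect to tighten the base of the exponent to $169^{|C|}$.
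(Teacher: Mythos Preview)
Your reduction to a transportation problem once the timing of $C$ is fixed is correct, and the interchangeability of $I$-jobs of the same type is fine. The gap is in the DP itself. The state $(S,R)$ is not sufficient: to update $R$ (cover jobs whose $I$-predecessors are all done) you must know, for every type $\tau=(P,S')$, how many $I$-jobs of that type have already been scheduled --- two $I$-jobs of the same type are interchangeable, but whether a given $c\in C$ has all its $I$-predecessors done depends on the per-type counts, not just on $(S,R)$. You explicitly flag this (``tracking per-type counts'' causes an $n^{O(|C|)}$ blow-up) but your proposed fix, ``fold the transportation into the DP,'' does not resolve it: the transportation instance is determined by the entire timing profile $(t_c)_{c\in C}$, which your state does not encode. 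The final paragraph is essentially a hope that some $C$-centric refinement of Dolev--Warmuth yields base $13^2$, with no mechanism given.

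The paper does something structurally different from a subset DP: it is a \emph{divide-and-conquer} on $|C|$. One guesses a time $T'$ at which at most half of $C$ lies on each side, and then guesses a \emph{fingerprint} recording, for each $c\in C$, one of $13$ statuses (left/middle/right; if left or right, whether $c$ is a sink there, whether it lies at a ``sink moment,'' and whether it is ``early''). The key structural lemma is that in a \emph{sink-adjusted} schedule, this fingerprint pins down, up to a bipartite perfect matching, which non-$C$ jobs go left, middle, and right --- so the subroutine \texttt{divide} runs in polynomial time. One then recurses on the two sides, each with vertex cover at most $|C|/2$. The recursion $T(|C|) \le 13^{|C|}\cdot T(|C|/2)\cdot n^{O(1)}$ solves to $\Os(13^{2|C|}) = \Os(169^{|C|})$; that is where $169$ comes from. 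The ingredient you are missing is the notion of sink-adjusted (and middle-adjusted) schedules from the Dolev--Warmuth line, which is exactly what lets one deduce the placement of almost all jobs from a constant number of bits per cover vertex and thereby avoid tracking per-type counts.
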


Note that the fixed-parameter tractability in $|C|$ alone is not necessarily surprising or useful. To get that, one could for example guess the order in which the jobs from $C$ are processed and schedule the rest of the jobs in a greedy manner.  This, unfortunately, would yield only a $|C|^{\Oh(|C|)} \cdot
\mathrm{poly}(n)$ algorithm which is not enough to give any improvement over an exact $\Os(2^n)$ algorithm in the general setting. 

Since the runtime in Theorem~\ref{thm:VC} does not depend on the number of machines, Theorem~\ref{thm:mainthm} follows per the above discussion even when $m = \varepsilon n$ for some small constant $\varepsilon > 0$: In such cases the binomial coefficient $\binom{n}{m}$ of Algorithm (B) is still small enough to yield an $\Oh(1.995^n)$ time algorithm. For large $m > \varepsilon n$, we use a combination of the Subset Convolution technique and simple structural observations to design an $\Os(\AC+2^{n-m})$ time algorithm for \sched. See also Figure~\ref{tikz:MainThmProof}. 

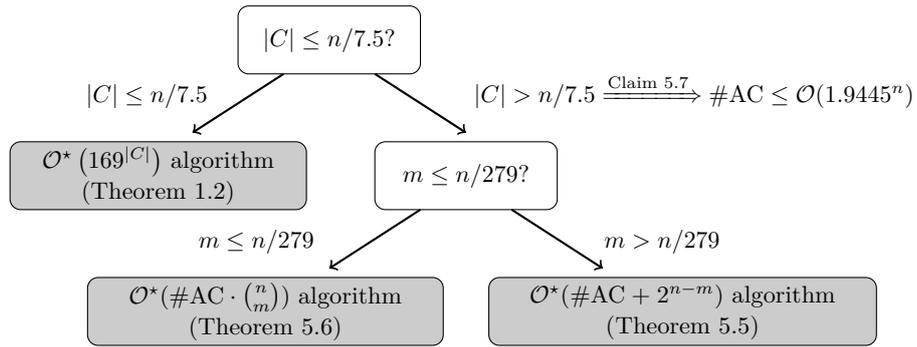
\begin{figure}[h]
\centering
     \begin{tikzpicture} [scale=0.6, every node/.style={scale=.9}] 
		\draw [rounded corners, fill = white!100!black] (0,0) rectangle (4,1.5) node[pos=.5] {$|C| \le n/7.5$? };
		
		\draw [rounded corners, fill = white!80!black] (-5,-3) rectangle (1.5,-1.5) node[pos=.5,text width=4cm, align=center] {$\Os\left(169^{|C|}\right)$ algorithm \\ (Theorem~\ref{thm:VC})};
		\draw [rounded corners, fill = white!100!black] (3,-3) rectangle (7,-1.5) node[pos=.5] {$m \le n/279$?};
		
		\draw[->,thick]  (1,0) to (-1,-1.3);
		\node at (-2,-0.5) {$|C| \le n/7.5$};
		\draw[->,thick]  (3,0) to (5,-1.3);
		\node at (10,-0.5) {$|C| > n/7.5 \xRightarrow[]{\text{Claim~\ref{claim:nrantichains}}} \AC \le \Oh(1.9445^n)$};
		
		\draw [rounded corners, fill = white!80!black] (-3.3,-6) rectangle (4.5,-4.5) node[pos=.5,text width=4cm, align=center] {$\Os(\AC\cdot \binom{n}{m})$ algorithm \\  (Theorem~\ref{thm:DP})};
	
		\draw [rounded corners, fill = white!80!black] (5.5,-6) rectangle (14,-4.5) node[pos=.5,text width=5cm, align=center ] {$\Os(\AC + 2^{n-m})$ algorithm \\ (Theorem~\ref{thm:baseline2})};
		
		\draw[->,thick]  (4,-3) to (2,-4.3);
		\node at (0.4,-3.7) {$m \le n/279$};
		\draw[->,thick]  (6,-3) to (8,-4.3);
		\node at (9.3,-3.7) {$m > n/279$};

    \end{tikzpicture}
    
    \caption{Overview of use of algorithms for proving Theorem~\ref{thm:mainthm}.}
    \label{tikz:MainThmProof}
\end{figure}

In the next paragraph, we sketch our insights behind the proof of Theorem~\ref{thm:VC}.


\paragraph*{Our approach for Theorem~\ref{thm:VC}}
The central inspiration of our algorithm is the following structural insight of the aforementioned $\Oh(n^{h(m-1)+1})$ time algorithm by Dolev and Warmuth~\cite{dolev1984scheduling}: let $z$ be the first time slot a sink (i.e., a job $v$ for which there is no precedence constraint $v \prec w$) is
scheduled. Then, there exists an optimal schedule (which is called a
\emph{zero-adjusted} schedule) for which the set of jobs before and after
timeslot $z$ can be reconstructed in polynomial time from the set of jobs scheduled \emph{at} $z$. Equipped with this observation, Dolev and
Warmuth~\cite{dolev1984scheduling} partition the schedule at timeslot $z$, (non-deterministically) guess the set of jobs scheduled at $z$ and
construct two subproblems by deducing the set of jobs scheduled before and after $z$. Then, they show that each of these subproblems
consists of a graph of height at least one less than the original graph and solve the subproblems recursively.

We extend the definition of zero-adjusted schedules and apply it to the setting of a small vertex cover. We let a \emph{sink moment} be a moment in the
schedule where at least one sink and at least one non-sink are scheduled. We define a \emph{sink-adjusted} schedule where we require that after every sink moment only successors of the jobs in the sink moment and some sinks are processed. We also show that there always exists an optimal schedule that is sink-adjusted.  

\begin{insight}
    \textbf{Key Insight}: In a sink-adjusted schedule, for each non-sink $j$ processed at time $t$ there is a chain of predecessors of $j$ intersecting all the sink moments before $t$. 
\end{insight}

\begin{figure}[ht!]
    \centering
    \includegraphics[width=0.7\textwidth]{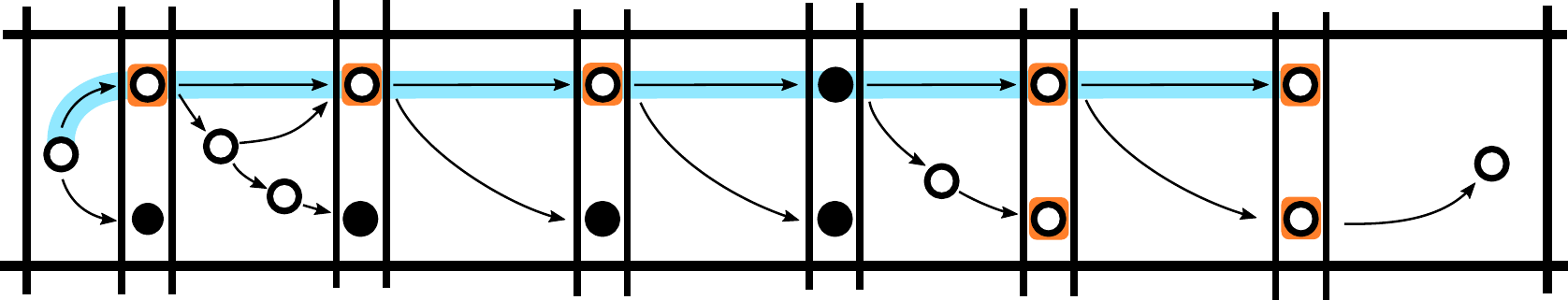}
    \caption{Illustration of Key Insight with an example of a sink-adjusted schedule. Sink-moments are distinguished between two
    bars. Jobs in the vertex cover are filled white (the remaining jobs are filled black). In a chain (highlighted blue)
    only one job is not in the vertex cover. 
    If the set jobs of the vertex cover scheduled at sink moments (the orange jobs) are known, then the position of a job that is not a sink is roughly determined, due to the Key Insight.
    }
    \label{fig:technique-VC}
\end{figure}

Note that any chain 
can contain at most one vertex not from the vertex cover. Since we are allowed to make guesses about jobs in the vertex cover, we can guess which jobs of the vertex cover are in sink moments. Subsequently, for each non-sink job $j$ we can compute the maximum length of a chain of predecessors of $j$ that are processed in sink moments, and this maximum length indicates at or in between which sink moments $j$ is scheduled (up to a small error due to the unknown existence and location of one vertex not from the vertex cover in this chain).

We split the
schedule at the moment $T'$ where roughly half of the vertex cover jobs are processed. This creates two subproblems: one formed by all jobs scheduled before $T'$ and one formed by all jobs scheduled after $T'$. Then, we use that both of these subproblems admit a
sink-adjusted schedule. For the vertex cover jobs we guess in which subproblem they are processed. We are left to partition the jobs that are not in $C$ and are sinks in the first subproblem or sources in the second subproblem (since for the remaining jobs, this is guessed or implied by the precedence constraints). 

To determine this, we find a perfect matching on a bipartite graph. One side of this graph consists of the jobs for which it is still undetermined in which subproblem they are processed. On the other side we put the possible positions for these jobs in the subproblems. Edges of this graph indicate that a job can be processed at a given position. There are no precedence constraints between these unassigned jobs since all such jobs are not in the vertex cover, and therefore a perfect matching will correspond to a feasible schedule. How to find these positions and how to define the edges of this graph is not directly clear and will be explained in Section~\ref{sec:VC}.

\subsection*{Related Work}

The \sched problem has been studied extensively from multiple angles throughout
the last decades. Ullman \cite{ullman1975np} showed that it is
$\mathsf{NP}$-complete via a reduction from $3$\textsc{-SAT}. Later, Lenstra
and Rinnooy Kan \cite{lenstra1978complexity} gave a somewhat simpler reduction
from $k$\textsc{-Clique}. 

The \sched problem is known to be solvable in polynomial time for certain structured inputs. 
Hu~\cite{hu1961parallel} gave a polynomial time algorithm when precedence
graph is a tree. This was later improved by
Sethi~\cite{Sethi:76:Scheduling-graphs} who showed that these instances can be
solved in $\Oh(n)$ time. Garey et al.~\cite{garey1983scheduling} considered a
generalization when the precedence graph is an \emph{opposing forest}, i.e., the
disjoint union of an in-forest and out-forest. They showed that the problem is $\mathsf{NP}$-hard when $m$ is
given as an input, and that the problem can be
solved in polynomial time when $m$ is a fixed constant. Papadimitriou and
Yannakakis~\cite{papadimitriou1979scheduling} gave an $\Oh(n+m)$ time algorithm
when the precedence graph is an \emph{interval order}. 
Fujii et al.~\cite{fujii1969optimal} presented the first polynomial time
algorithm when $m=2$. Later, Coffman and Graham~\cite{coffman1972optimal} gave
an alternative $\Oh(n^2)$ time algorithm for two machines. The runtime was later
improved to near-linear by Gabow~\cite{Gabow:82:An-almost-linear-algorithm} and 
finally to truly linear by Gabow and Tarjan~\cite{gabow1985linear}. 
For a more detailed overview and other variants of \sched, see the survey of Lawler
et al~\cite{lawler1993sequencing}.

\subparagraph*{Exponential Time / Parameterized Algorithms}
A natural parameter for \sched is the number of machines $m$. However, even showing that this parameterized problem is in $\mathsf{XP}$ would resolve Open
Problem~\ref{open:3machines}.  Bodlaender and
Fellows~\cite{DBLP:journals/orl/BodlaenderF95} show that problem is at least
$\mathsf{W[2]}$-hard parameterized by $m$. Recently, Bodlaender et
al.~\cite{bodlaender2021parameterized} showed that \sched parameterized by $m$
is $\mathsf{XNLP}$-hard, which implies $\mathsf{W[t]}$-hardness for every $t$.
Hence, a fixed-parameter tractable algorithm is unlikely.


Bessy and Giroudeau~\cite{bessy2019parameterized} showed that a problem called
``Scheduling Couple Tasks'' is FPT parameterized by the vertex cover of a certain associated graph.
To the best of our knowledge, this is the only other result on the parameterized complexity of scheduling when the size of the vertex cover is considered to be a parameter. 

Cygan et al.~\cite{cygan2014scheduling} study scheduling jobs of arbitrary length with precedence
constraints on one machine and proposed an $\Oh((2-\eps)^n)$ time
algorithm (for some constant $\eps > 0$). Similarly to our work, Cygan et
al.~\cite{cygan2014scheduling} consider a dynamic programming algorithm over subsets and observe that a small maximum matching in the precedence graph can be exploited to significantly reduce the number of subsets that need to be considered.

\subparagraph*{Approximation}

The \sched problem was extensively studied through the lens of approximation algorithms, where the aim is to approximate the makespan.
%
Recently, researchers analysed the problem in the important case when $m =
\Oh(1)$. In
a breakthrough paper, Levey and Rothvo\ss~\cite{levey-rothvoss} developed a
$(1+\eps)$-approximation in $\exp\left(\exp\left(\Oh(\frac{m^2}{\eps^2}
\log^2\log n)\right)\right)$ time. This was subsequently improved
by~\cite{garg2017quasi}.  The currently fastest algorithm is due to Li
\cite{li2021towards} who improved the runtime to
$n^{\Oh\left(\frac{m^4}{\varepsilon^3}\log^3\log(n)\right)}$.
Interestingly, a key step in these approaches is that approximation is easy for instances of low height. 
A prominent open question is to give a PTAS even when the number of machines is fixed (see the recent survey of Bansal~\cite{bansal2017scheduling}).

\subsection*{Organization}

In Section~\ref{sec:Prelim} we give short preliminaries. In Section~\ref{sec:Zero} we 
extend the definition of zero-adjusted schedules from Dolev and
Warmuth~\cite{dolev1984scheduling} to sink-adjusted schedules and discuss the structural insights of sink-adjusted schedules (with respect to a vertex cover).
We prove Theorem~\ref{thm:VC} in Section~\ref{sec:VC} and subsequently we show how Theorem~\ref{thm:VC} implies
Theorem~\ref{thm:mainthm} in Section~\ref{sec:exact}. We provide concluding remarks in Section~\ref{sec:conc}. Finally, we discuss (rather standard) lower bound for the problem in Appendix~\ref{sec:LB}.

\section{Preliminaries} \label{sec:Prelim}

If $B$ is a Boolean, then $\iverson{B} = 1$ if $B$ is true and $\iverson{B}=0$ if $B$ is false. We let $[N]$ denote the set of all integers $\{1,\dots,N\}$.
We use $\Ot(\cdot)$ notation to hide polylogarithmic factors and $\Os(\cdot)$ notation to hide polynomial factors in the input size.

\subparagraph*{Definitions related to the precedence constraints.}
Let the input graph $G = (V,A)$ be a precedence graph. Importantly, throughout the paper we will assume that $G$ is its transitive closure, i.e. if $(u,v)\in A$ and $(v,w) \in A$ then $(u,w)\in A$. We will interchangeably use the notations for arcs in $G$ and the partial order, i.e. $(v,w) \in A \iff v\prec w$. Similarly, we use jobs to refer to the vertices of $G$. 

The \emph{comparability graph} $\Gc = (V,E)$ of $G$ is the undirected graph obtained by replacing all directed arcs of $G$ with undirected edges. In other words, $v$ and $w$ are neighbors in $\Gc$ if and only if they are comparable to each other. 
A set $X\subseteq V$ of jobs is a \emph{chain} (\emph{antichain}) if all jobs in $X$ are pairwise comparable (incomparable).  
For a job $v$, we denote $\Pred(v) \coloneqq \{u: u \prec v\}$  as the set of all predecessors of $v$ and $\Pred[v] \coloneqq \Pred(v)\cup v$. For a set of jobs $X$, we let $\Pred(X) \coloneqq \cup_{v \in X}\Pred(v)$ and $\Pred[X] \coloneqq \cup_{v \in X}\Pred[v]$. Similarly, we define $\Succ(v) \coloneqq \{u: v \prec u \}$, $\Succ[v] \coloneqq \Succ(v) \cup \{v\}$, $\Succ(X) \coloneqq \cup_{v \in X} \Succ(v)$ and $\Succ[X] \coloneqq \cup_{v \in X} \Succ[v]$.

The \emph{height} $h(j)$ of a job $j$ is the length of the longest chain starting at job $j$, where length indicates the number of arcs in that chain. For example, the height of a job that has no successors is $0$. The height $h(G)$ of a precedence graph $G$ is equal to the maximum height of its jobs, i.e. $h(G) = \max_{j \in V} h(j)$.
We call all jobs that have no successors $\emph{sinks}$ and all jobs that have to predecessors $\emph{sources}$. For a set of jobs $X \subseteq V$ we denote $\Sinks(X)$ as all jobs of $X$ that have no successor within $X$ and $\Sources(X)$ as all jobs of $X$ that have no predecessor within $X$.

\subparagraph*{Schedules, dual graphs and dual schedules.}
A schedule $\sigma = (S_1,\dots,S_T)$ for precedence graph $G = (V,A)$ on $m$ machines is a partition of $V$ such that $|S_t|\le m$ for all $t \in [T]$ and for all $v \prec w$, if $v \in S_t$, $w\in S_{t'}$, then $t < {t'}$. We omit $G$ whenever it is clear from context. 
For a precedence graph $G = (V,A)$ we say that graph $\dual{G} = (V, \dual{A})$
is its \emph{dual} if all the arcs of $G$ are directed in the opposite
direction. We often explicitly use the fact that $\sigma = (S_1,\dots,S_T)$ is a
schedule for $G$ if and only if the dual schedule $\dual{\sigma} =
(S_T,\dots,S_1)$ is a schedule for $\dual{G}$. 

\begin{claim}\label{claim:sym}
    Let $\sigma = (S_1,\dots,S_T)$ be an optimal schedule for $G$. Then $\dual{\sigma} = (S_T,\dots,S_1)$ is an optimal schedule for $\dual{G}$.
\end{claim}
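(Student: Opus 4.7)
The plan is to verify two things: first, that $\dual{\sigma}$ is a feasible schedule for $\dual{G}$, and second, that among all feasible schedules for $\dual{G}$ it attains the minimum makespan.

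Feasibility is essentially free: the sentence immediately preceding the claim already observes that $(S_1,\dots,S_T)$ is a schedule for $G$ if and only if $(S_T,\dots,S_1)$ is a schedule for $\dual{G}$. I would spell this out once by checking the two conditions in the definition of a schedule. The machine-capacity condition $|S_t|\le m$ is invariant under reordering the blocks, so it passes through trivially. The precedence condition is the only one that uses the reversal: for any arc $(v,w)\in A$, by definition $(w,v)\in \dual{A}$; if $v\in S_t$ and $w\in S_{t'}$ with $t<t'$ in $\sigma$, then placing $S_T,\dots,S_1$ in reverse order puts $w$ strictly before $v$, which is exactly what $\dual{G}$ requires.

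For optimality I will use a simple contradiction argument, exploiting the fact that duality is an involution ($\dual{\dual{G}} = G$ and $\dual{\dual{\sigma}} = \sigma$). Suppose for contradiction that some schedule $\tau = (R_1,\dots,R_{T'})$ for $\dual{G}$ satisfies $T' < T$. By the feasibility direction applied to $\dual{G}$ and $\tau$, the reversed sequence $\dual{\tau} = (R_{T'},\dots,R_1)$ is a feasible schedule for $\dual{\dual{G}} = G$, and its makespan is also $T'$. This strictly beats the makespan $T$ of $\sigma$, contradicting the assumed optimality of $\sigma$ for $G$. Hence no such $\tau$ exists and $\dual{\sigma}$ is optimal for $\dual{G}$.

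There is no real obstacle here; the claim is a bookkeeping lemma whose content is entirely captured by the observation that reversal is a makespan-preserving bijection between schedules of $G$ and schedules of $\dual{G}$. The only thing to be careful about is not conflating the direction of the precedence arcs with the temporal order, which is why I would explicitly write out the precedence check once before invoking the symmetry for the optimality half.
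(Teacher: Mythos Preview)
Your proposal is correct and matches the paper's own proof essentially line for line: feasibility is checked via the machine-capacity and precedence conditions, and optimality follows by the same contradiction argument (reverse a hypothetically better schedule for $\dual{G}$ to beat $\sigma$ on $G$). The only difference is that you spell out the involution $\dual{\dual{G}}=G$ explicitly, which the paper leaves implicit.
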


\begin{proof}
    For any jobs $u, v \in V$ with $u\prec v$, we have that $v$ is processed
    after $u$ in $\sigma$. Hence, $v$ is processed before $u$ in $\dual{\sigma}$.
    Furthermore, any time slot in $\dual{\sigma}$ contains as most $m$ jobs. Hence,
    $\dual{\sigma}$ is a feasible schedule for $\dual{G}$. Schedule $\dual{\sigma}$ is also
    optimal: if not we could reverse $\dual{\sigma}$ and find a schedule with lower
    makespan for the original instance. 
\end{proof}



\section{Sink-adjusted Schedules} \label{sec:Zero}

We define a schedule $\sigma$ as a sequence of disjoint sets of jobs
$S_1,\dots,S_T$, such that a job in set $S_i$ is processed at time slot $i$;
note that we do not need to know on which machine a job is scheduled since the
machines are identical. Naturally, if $\sigma$ is feasible then $|S_i|
\le m$ for every $i \in [T]$. The makespan of such a schedule is $T$.  For notation purposes, we use $S_{[a,b]}= \bigcup_{a \le i \le b}S_i$ to denote
the set of jobs that are processed at a time-slot between $a$ and $b$.  

Let us
stress that we do not require that all input jobs to be in $S_{[1,T]}$. In fact, in the next sections, we will apply a divide-and-conquer
technique and split the schedule into \emph{partial schedules}.  To be explicit
about this, we use $V(\sigma)$ to denote the set $S_{[1,T]}$ of jobs
assigned by $\sigma$. Naturally, a final feasible schedule needs to assign all
the input jobs. 

We prove
that we can restrict our search to schedules with certain properties, by reusing
and extending the definition of a zero-adjusted schedule from
\cite{dolev1984scheduling}. 
The definitions in the Section~\ref{sec:defs-sinks} will also be used to get an
$\Os(2^{n-m} + \AC)$ algorithm in Section~\ref{sec:exact}. Next, in
Section~\ref{sec:sinks-vc} we will consider properties of vertex cover of
sink-adjusted schedules. 
\subsection{Definition and existence of sink-adjusted schedules}
\label{sec:defs-sinks}
First, let us define the following sets for any
schedule $\sigma = (S_1\dots,S_T)$.
\begin{definition}[Sets $Z_t$ and $H_t$]
    For any time-slot $S_t$ of schedule $\sigma$,
    we define $Z_t \coloneqq S_t \cap
    \Sinks(V(\sigma))$ as all its jobs with zero height. We define set $H_t
    \coloneqq S_t \setminus \Sinks(V(\sigma))$ as all jobs in $S_t$ that have a
    height strictly greater than $0$.
\end{definition}

We then define a sink-adjusted schedule as follows (see
Figure~\ref{fig:sink-moment-def}): 

\begin{definition}[Sink-adjusted schedule and sink moments]\label{def:sinkmom}
Let $\sigma = (S_1,\ldots,S_T)$. An integer $t \in [T]$  is a \emph{sink moment}
in the schedule $\sigma$ if $0 < |H_t| < m$. 

We say that schedule $\sigma$ is \emph{sink-adjusted} if $(i)$ for
every sink moment $t \in [T]$ all jobs in $S_{t+1},\ldots,S_T$ are either
successors of some job in $S_t$, or are sinks (i.e., $S_{[t+1,T]} \subseteq
\Succ(S_t) \cup \Sinks(V(\sigma))$, and $(ii)$ all moments containing only
sinks ($S_i \subseteq \Sinks(V(\sigma))$) are scheduled after every non-sink
is scheduled.
\end{definition}

\begin{figure}[ht!]
    \centering
    \includegraphics[width=0.7\textwidth]{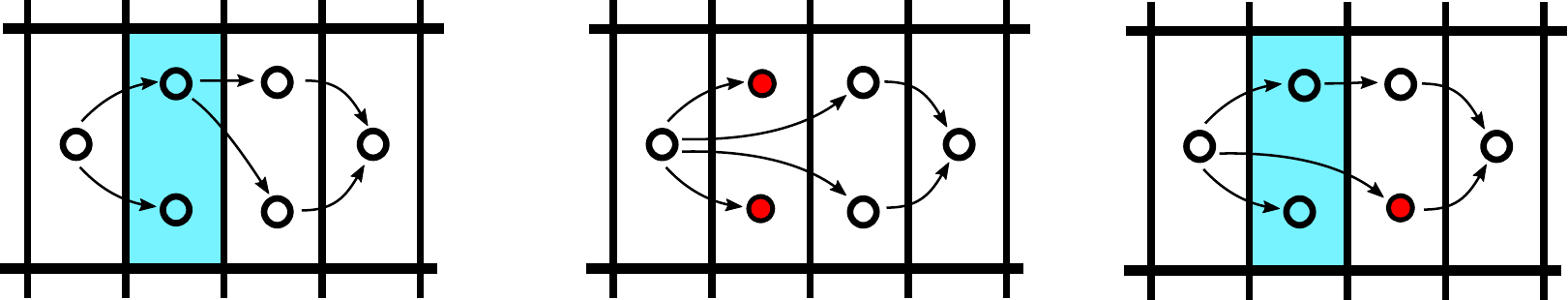}
    \caption{Examples of three schedules with makespan $4$ and $m=2$. The sink-moments are
    highlighted in blue. Only the left schedule is sink-adjusted. The middle
schedule is not sink-adjusted because the red jobs are sinks and are scheduled
before a non-sink. The right schedule is not sink-adjusted because the red job is not a
predecessor of any job in a trailing sink-moment.}
    \label{fig:sink-moment-def}
\end{figure}

Next, we prove that we can restrict our search for optimal schedules to sink-adjusted ones. The strategy behind the proof is to swap jobs until our schedule is sink-adjusted.

\begin{theorem}\label{thm:sink-adjusted}
    For every instance of \sched, there exists an optimal schedule that is sink-adjusted.
\end{theorem}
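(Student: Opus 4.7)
The plan is to pick, among all optimal schedules of the instance, one $\sigma = (S_1,\dots,S_T)$ minimizing the potential
\[
\Phi(\sigma) \coloneqq \sum_{j \in V \setminus \Sinks(V)} t_\sigma(j),
\]
where $t_\sigma(j)$ denotes the time slot in which $\sigma$ processes $j$. Such a minimizer exists since only finitely many optimal schedules exist. I will show that any such $\sigma$ is sink-adjusted by arguing that a violation of either part of Definition~\ref{def:sinkmom} would enable a local modification keeping the makespan at $T$ while strictly decreasing $\Phi$, contradicting minimality.

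For condition (ii), suppose some $S_i \subseteq \Sinks(V)$ yet a non-sink is scheduled at time $> i$. Let $j^*$ be the earliest such non-sink. Every predecessor of $j^*$ is a non-sink (it has $j^*$ as successor), and by the choice of $j^*$ together with $S_i \subseteq \Sinks(V)$, no such predecessor lies in $\{i,\dots,t_\sigma(j^*)-1\}$; hence every predecessor of $j^*$ sits at time $< i$. We may therefore place $j^*$ into $S_i$: use a free slot if $|S_i|<m$, otherwise swap $j^*$ with an arbitrary sink in $S_i$. The resulting schedule has makespan $\le T$ and $\Phi$ drops by $t_\sigma(j^*)-i > 0$, a contradiction.

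For condition (i), suppose some sink moment $t$ admits a non-sink $j$ at time $> t$ with $j \notin \Succ(S_t)$; choose such a $j$ minimizing $t_\sigma(j)$. For every predecessor $p \prec j$ (necessarily a non-sink) I claim $t_\sigma(p) < t$. If $p \in S_t$ then transitivity forces $j \in \Succ(S_t)$, a contradiction. If $t_\sigma(p) > t$, then for any $s \in S_t$ with $s \prec p$ we would get $s \prec j$, so in fact $p \notin \Succ(S_t)$; but then $p$ is a strictly earlier violator, contradicting the choice of $j$. Hence all predecessors of $j$ lie before $t$, and since $|H_t|<m$ the slot at time $t$ either has room or contains a sink, so we can move $j$ into $S_t$ (in a free slot, or swapping with a sink of $S_t$), yielding a feasible schedule with $\Phi$ decreased by $t_\sigma(j)-t > 0$; contradiction.

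The main technical point in each argument is the swap case, where a sink $s$ gets displaced from its slot to the later slot where $j$ (or $j^*$) used to sit. This is harmless because $s$ has no successors and its predecessors already sat strictly before $s$'s original (earlier) position. Every remaining check --- precedence for the moved non-sink, the per-slot capacity of $m$, and feasibility of the rest of the schedule --- is routine.
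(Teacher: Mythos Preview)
Your proof is correct. The core swap operation---pulling a non-sink forward into a slot that either has room or contains a sink---is exactly what the paper does. The paper presents this as an iterative repair procedure (first move all-sink slots to the end to obtain~(ii), then repeatedly swap to obtain~(i), arguing that each step makes progress), whereas you package the same moves as a single extremal argument via the potential~$\Phi$. Your treatment of~(ii) differs slightly in that you pull a non-sink forward rather than pushing the all-sink slot to the end, but both moves are easily seen to be valid; the argument for~(i) is essentially identical in both proofs.
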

\begin{proof}
Take $\sigma = (S_1,\dots,S_T)$ to be an optimal schedule that is not sink-adjusted. 
First we prove property (ii). Let $t \in [T]$ be such that $S_t\subseteq \Sinks(V(\sigma))$, but there are non-sinks processed after $t$. Then take schedule $\sigma' = (S_1,\dots,S_{t-1},S_{t+1},\dots,S_{T},S_{t})$. In other words, we put the jobs from $S_t$ at the end of the schedule. Since $\sigma$ is optimal and the jobs in $S_t$ do not have any successors, $\sigma'$ is also optimal. So take $\sigma = \sigma'$. This step can be repeated until the second property holds.

Now we prove property (i). Let $\sigma$ be an optimal schedule that is not sink-adjusted and has the earliest $z \leq T$ such that $z$ is a sink moment, but $S_{[z+1,T]} \not\subseteq \Succ(S_z)\cup \Sinks(V(\sigma))$. 
Note that all the jobs in $\Succ(S_z)$ should be processed at or after $z+1$. Hence, there is a job $j$ with the following properties:  (1) $j \in S_i$ for some $i\in[z+1,T]$, (2) $j$ is not a sink, (3) $j\not \in \Succ(S_z)$ and (4) $\Pred(j) \cap S_{[z+1,T]} = \emptyset$. Property (4) holds because we can simply take the earliest job with properties (1-3).

Next, let us look at sink moment $z$. By definition it holds that $|H_z| < m$. This can happen because either $|Z_z| > 0$ or $|S_z| < m$. In the first case 
$|Z_z|>0$, let $j'$ be some job in $Z_z$. Observe that we can swap positions of
$j$ and $j'$ in the schedule. This new schedule is still feasible: $j'$ can be
processed later because it does not have any successors and $j$ can be processed
earlier, because it does not have any predecessors at or after time $z$.
Similarly, when $|S_z| < m$, job $j$ can be moved to empty slot in $S_z$. We can repeat this procedure until either $|S_z| = m$, $|S_z| = 0$, or $S_{[z+1,T]} \subseteq \Succ(S_z) \cup \Sinks(V(\sigma))$. Note that after this modification $\sigma$ remains an optimal schedule and none of the time slots before $z$ was changed. Because $z$ is not a sink moment anymore, the first sink moment is now after $z$. We can repeat this step until all sink moments satisfy the property (i).
\end{proof}

The reader should think about these sink moments as guidelines in the
sink-adjusted schedule that help us determine the positions of the jobs. Take
for example the first sink moment $z$.  If we know the $H_z$, then directly from
Definition~\ref{def:sinkmom} we can deduce all the jobs that are processed
before $z$ and all the jobs that are processed after $z$ (except some edge
cases, see Section~\ref{sec:VC}).  Let us remark that
the deduction of locations of
jobs based on the $H_z$ was also used by Dolev and
Warmuth~\cite{dolev1984scheduling}. 

\subsection{The structure of sink-adjusted schedules versus the vertex
cover}\label{sec:sinks-vc}
Now we assume that $C$ is a vertex cover of $\Gc[V(\sigma)]$.
We start with a simple observation about $C$:

\begin{claim} \label{claim:chain}
    Any chain in $G[V(\sigma)]$ contains at most one vertex from $V(\sigma) \setminus C$. 
\end{claim}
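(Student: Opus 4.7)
The plan is to observe that a chain in $G[V(\sigma)]$ corresponds to a clique in the induced comparability graph $\Gc[V(\sigma)]$, and then use the vertex-cover property to conclude that at most one vertex of such a clique can lie outside $C$.

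Concretely, I would proceed as follows. Suppose for contradiction that some chain $X$ in $G[V(\sigma)]$ contains two distinct vertices $u, v \in V(\sigma) \setminus C$. By the definition of a chain, $u$ and $v$ are comparable in the partial order $\prec$, i.e.\ either $u \prec v$ or $v \prec u$. By definition of the comparability graph, this implies that $\{u,v\}$ is an edge of $\Gc$, and since both endpoints lie in $V(\sigma)$ it is an edge of $\Gc[V(\sigma)]$.

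Now I invoke the assumption that $C$ is a vertex cover of $\Gc[V(\sigma)]$: every edge of $\Gc[V(\sigma)]$ must have at least one endpoint in $C$. But we assumed $u,v \notin C$, a contradiction. Hence a chain in $G[V(\sigma)]$ cannot contain two vertices from $V(\sigma) \setminus C$, which is exactly the claim.

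No step here looks like a real obstacle; the whole argument is essentially unpacking the definitions of chain, comparability graph, and vertex cover. The only thing to be careful about is that the claim is stated for the induced subgraph on $V(\sigma)$ rather than on all of $V$, but since chains in $G[V(\sigma)]$ only involve vertices of $V(\sigma)$, restricting the vertex-cover property to $\Gc[V(\sigma)]$ suffices.
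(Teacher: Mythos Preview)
Your proof is correct and essentially identical to the paper's own argument: assume two distinct non-$C$ vertices in a chain, observe they are comparable and hence adjacent in $\Gc[V(\sigma)]$, contradicting that $C$ is a vertex cover. There is nothing to add.
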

\begin{proof}
    For the sake of contradiction, assume that there is a chain with two different
    jobs $v,w \in V\setminus C$. These jobs are comparable to each other, hence
    there exists an edge $\{v,w\}$ in the graph $\Gc[V(\sigma)]$. However, this edge is not covered by
    $C$, which contradicts the fact that $C$ is a vertex cover of $\Gc[V(\sigma)]$. 
\end{proof}

Recall that we assumed that $G$ is equal to its transitive closure. 
We define the depth of a vertex.

\begin{definition}[Depth]
    For a set $X \subseteq V$, the \emph{depth} $\depth{X}{v}$ of a job $v \in V$
    with respect to $X$ is the length of the longest chain in $G[X \cup \{v\}]$ that
    ends in $v$.
\end{definition}

Recall, that we measure the length of a chain in its number of edges.
Note that any source has depth $0$. 
For the remainder of this section, we assume that $\sigma = (S_1,\dots,S_T)$ is a sink-adjusted schedule.
Next, we define the sinks moments of $\sigma$.

\begin{definition}[Sink Moments of the Schedule]
    Let $1 \le z(1) < \ldots < z(\ell) \le T$ be the consecutive sink moments of $\sigma$. We let
    $\SM \coloneqq \bigcup_{i \in [\ell]} S_{z(i)}$ to be the set of all jobs in
    the sink moments of $\sigma$ (we set $z(0) \coloneqq 0$ and
    $z(\ell+1) \coloneqq T+1$ for convenience).
\end{definition}

Define $\SinkVC \coloneqq C \cap \Sinks(V(\sigma))$ and let 
$\HighVC \coloneqq (C \cap \SM) \setminus \SinkVC$. 
In other words, $\SinkVC$ is the set of jobs from the vertex cover $C$ that are
sinks and $\HighVC$ is the set of jobs from $C$ that are
processed during sink moments $z(1),\ldots,z(\ell)$, but are not sinks.
Now, we show the following properties of jobs in $\HighVC$.

\begin{property}[Jobs in $\HighVC$ are almost determined]
    \label{prop:Q1}
    If $v \in \HighVC$ is scheduled at timeslot $t$ (i.e., $v \in S_t$), then it must
    be that $t = z(\depth{\HighVC}{v}+1) $ or $t = z(\depth{\HighVC}{v}+2)$. 
\end{property}
\begin{proof}
    Let us fix an arbitrary $v \in \HighVC$.  By definition of $\HighVC$, we know that $v
    \in C$, $v \notin \Sinks(\sigma(V))$ and there exists $i
    \in [\ell]$ such that $v \in S_{z(i)}$.
    Because we assumed that the schedule $\sigma$ is sink-adjusted, for any sink moment
    $z(j)$ it holds that $S_{[z(j)+1,T]} \subseteq \Succ(S_{z(j)}) \cup
    \Sinks(V(\sigma))$ where $j \in [\ell]$. 
    This implies that $t=z(\depth{\SM}{v}+1)$. Note that $\depth{\SM}{v} \geq
    \depth{\HighVC}{v}$ since
    $\HighVC \subseteq \SM$. Moreover, $\depth{\SM}{v} \leq \depth{\HighVC}{v}+1$ since any
    chain in $G[\SM]$ can contain at most one vertex in $\SM \setminus \HighVC$ since such a vertex is either a sink or not in $C$, and in the last case Claim~\ref{claim:chain} applies.
\end{proof}

\begin{property}[Jobs in $C \setminus (\HighVC \cup \SinkVC)$ are roughly determined]
    \label{prop:Q2}
    Let $v \in C \setminus (\HighVC \cup \SinkVC)$ be a vertex
    that is scheduled at moment $t \in [T]$ (i.e., $v \in S_t$),  
    then $z(\depth{\HighVC}{v}) < t < z(\depth{\HighVC}{v}+1)$ or $z(\depth{\HighVC}{v}+1) < t < z(\depth{\HighVC}{v}+2)$ 
\end{property}

\begin{proof}
The proof is similar to that of Property~\ref{prop:Q1}. Let $v \in C\setminus (\HighVC \cup \SinkVC)$. Hence, $v \in C$, $v \notin \Sinks(\sigma(V))$ and $v$ is not processed at any sink moment. 
    Because we assumed that the schedule $\sigma$ is sink-adjusted, for any sink moment
    $z(j)$ it holds that $S_{[z(j)+1,T]} \subseteq \Succ(S_{z(j)}) \cup
    \Sinks(V(\sigma))$ where $j \in [\ell]$. 
    This implies that $z(\depth{\SM}{v}) < t < z(\depth{\SM}{v}+1)$. Note that $\depth{\SM}{v} \geq
    \depth{\HighVC}{v}$ since
    $\HighVC \subseteq \SM$. Moreover, $\depth{\SM}{v} \leq \depth{\HighVC}{v}+1$ since any
    chain in $G[\SM]$ can contain at most one vertex in $\SM \setminus \HighVC$ since such a vertex is either a sink or not in $C$, and in the last case Claim~\ref{claim:chain} applies.
    Note that $v \in \HighVC \cup \SinkVC$, so $v$ cannot be processed at any sink moment. Hence the boundaries on $t$ follow.
\end{proof}

\begin{figure}[ht!]
    \centering
    \includegraphics[width=\textwidth]{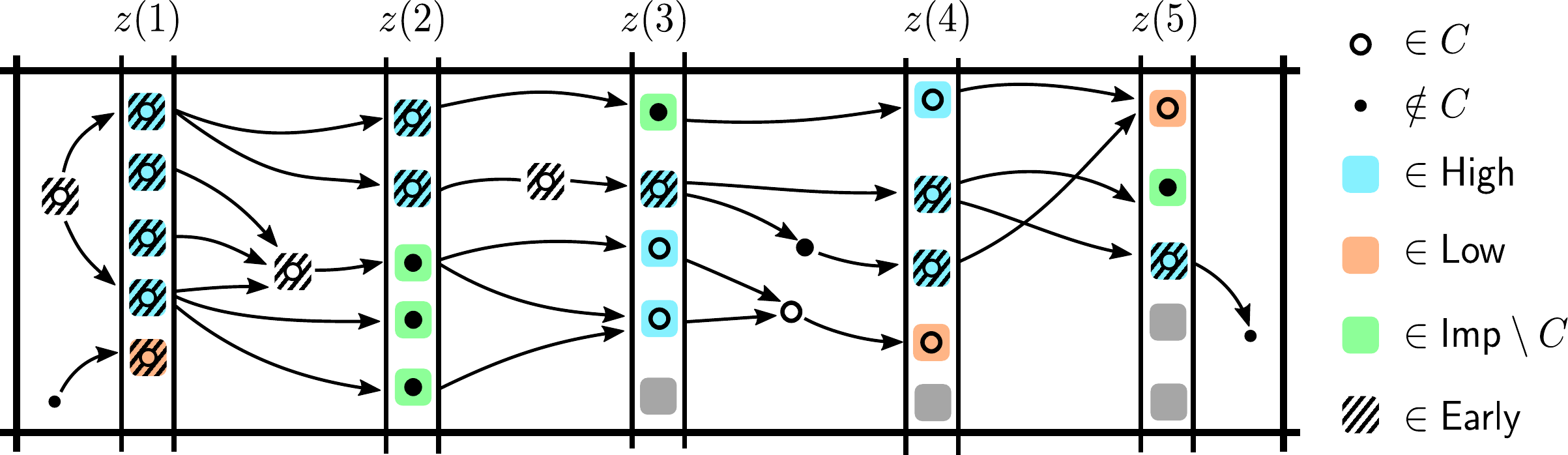}
    \caption{Figure presents a schedule with five sink moments $z(1),\ldots,z(5)$. Jobs from Vertex
    Cover are marked with a black ring (the remaining jobs are depicted with a black-filled circle).
    Set $\SM$ contains every job in the sink-moment (highlighted either blue, green or red). Jobs from $\HighVC$ are highlighted in blue. Jobs from $\SinkVC$ are highlighted red. Jobs from $\SM \setminus C$ are highlighted
    green. Early jobs are wrapped with a solid black border. Gray rectangles are
    empty slots in the schedule. Observe that for a fixed chain there exists a
    single moment after which every job in it is late. This moment corresponds to
    the job from $\SM \setminus C$ on this chain.}
    \label{fig:definitions}
\end{figure}

Next, we define Early Jobs. See Figure~\ref{fig:definitions} for example of
$\HighVC$, $\SinkVC$ and intuition behind $\EarlyVC$ jobs.

\begin{definition}[Early Jobs]
    \label{def:early-job}
    We say a job $v \in S_t$ is \emph{early} if either
    \[
    \begin{aligned}
        v &\in \HighVC \cup \SinkVC & \textnormal{and } &  &t&=z(\depth{\HighVC}{v}+1), &\textnormal{ or}\\
        v &\in C \setminus  (\HighVC \cup \SinkVC) & \textnormal{and } &
        z(\depth{\HighVC}{v})) < &t & <  z(\depth{\HighVC}{v}+1). &
    \end{aligned}
    \]
\end{definition}
If a job is not early, we call it \emph{late}. By Property~\ref{prop:Q1}, a late
job $v$ in $\HighVC$ is scheduled at $z(\depth{\HighVC}{v}+2)$. By Property~\ref{prop:Q2} a late job $v$ in $C \setminus (\HighVC \cup \SinkVC)$ is scheduled in between
$z(\depth{\HighVC }{v}+1)$ and $z(\depth{\HighVC }{v}+2)$. Additionally it will be useful in Section~\ref{sec:VC} to know which jobs in $\SinkVC$ are early and late in order to ensure that precedence constraints $v \prec w$ with $w \in \SinkVC$ and $v$ are not scheduled at the same sink moment.

Crucially, if we guess the set $\HighVC$ of a sink-adjusted schedule $\sigma$, and which non-sink jobs are early and which non-sink jobs are late we can already deduce for each job in 
\[
 \HighVC \cup (C \setminus (\HighVC \cup \SinkVC)) \cup (V(\sigma) \setminus (C \cup \Sinks(V(\sigma))) = V(\sigma) \setminus \Sinks(V(\sigma))
\]
on (or in between) which sink-moment it is scheduled.

\begin{property}[Jobs in $V(\sigma) \setminus (C \cup \Sinks(V(\sigma)))$ are also roughly determined]\label{prop:Q3}
    Let $v \in V(\sigma) \setminus (C \cup \Sinks(V(\sigma)))$ be a vertex that is scheduled at moment $t \in [T]$ (i.e., $v \in S_t$).
    Then $z(\depth{\HighVC }{v}) < t \le z(\depth{\HighVC }{v}+1)$.
\end{property}
\begin{proof}
    The proof is similar to that of Property~\ref{prop:Q1}. Let $v \in V(\sigma)\setminus(C \cup \Sinks(V(\sigma))$. Hence, $v \not\in C$, $v \notin \Sinks(V(\sigma))$. Note that $v$ might or might not be scheduled at a sink moment. 
    Because we assumed that the schedule $\sigma$ is sink-adjusted, for any sink moment
    $z(j)$ it holds that $S_{[z(j)+1,T]} \subseteq \Succ(S_{z(j)}) \cup
    \Sinks(V(\sigma))$ where $j \in [\ell]$. 
    This implies that $z(\depth{\SM}{v}) < t \le z(\depth{\SM}{v}+1)$. Note that $\depth{\SM}{v} \geq
    \depth{\HighVC}{v}$ since
    $\HighVC \subseteq \SM$. Moreover, $\depth{\SM}{v} \leq \depth{\HighVC}{v}$ because $v \not \in C$ and thus $v$ is the only element in a chain ending in $v$ that is not in $C$ by Claim~\ref{claim:chain}.
\end{proof}

\section{Single Exponential FPT Algorithm when Parameterized by Vertex Cover of
the Comparability Graph} \label{sec:VC}
In this section we prove Theorem~\ref{thm:VC} and give an $\Os(169^{|C|})$ time algorithm
for \sched. We assume that the vertex cover $C \subseteq V$ of the comparability
graph is given as input (if not, we can easily find it with the standard algorithm in $\Os(2^{|C|})$ time).
Also, we assume that the deadline is $T$ and that there are exactly $m\cdot T$
jobs to be processed; this can be ensured by adding $m\cdot T - n$ jobs without any precedence constraints. Note
that this operation does not increase the size of the vertex cover of $\Gc$, as
no edge is added to the precedence graph. Moreover, the number of added jobs is
bounded by $n \cdot m \le n^2$, which is only an additional polynomial factor in
the running time.  For convenience we use the following notation throughout this section:
\begin{definition}
We call $(S_1,\ldots,S_T)$ a \emph{tight $m$-schedule for $G$} if the $S_i$'s partition $V(G)$ and for all $i\in[T]$ we have $|S_i|=m$.
\end{definition}
If $G$ is clear from the context, it will be omitted.
By the above discussion, we can restrict attention to detecting tight $m$-schedules.

\subsection{Middle-adjusting schedules and their fingerprints}

We will split the schedule at some time slot $T'$ into two subproblems and solve
them recursively. The issue with this approach is that even if we know which
jobs are scheduled at time slot $T'$ we still need to determine which jobs are
scheduled before and after $T'$. To assist us with this task, we restrict our
search to schedules with a specific structure. We call these structure
\emph{middle-adjusted} schedule.

\begin{definition}[Middle-adjusted Schedule]
    We say that a schedule $\sigma = (S_1,\ldots,S_T)$ is \emph{middle-adjusted}
    at timeslot $T'$ if $\sigma_L \coloneqq (S_1,\ldots,S_{T'-1})$ and
    $\dual{\sigma_R} \coloneqq (S_T,\ldots,S_{T'+1})$ are both sink-adjusted.
\end{definition}

\begin{lemma} \label{lemma:composition}
    For any tight $m$-schedule $\sigma=(S_1,\ldots,S_{T})$ and time $T' \in
    [T]$, there is a tight $m$-schedule $\sigma' = (S'_1,\ldots,S'_T)$
    middle-adjusted at timeslot $T'$ such that $S'_{T'}=S'_{T'}$,
    $S'_{[1,T'-1]}=S_{[1,T'-1]}$ and $S'_{[T'+1,T]}=S_{[T'+1,T]}$.
\end{lemma}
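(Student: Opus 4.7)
The plan is to apply the sink-adjustment procedure from the proof of Theorem~\ref{thm:sink-adjusted} independently to the two sides of the split at $T'$, leaving $S_{T'}$ in place. Concretely, first view $\sigma_L := (S_1,\ldots,S_{T'-1})$ as a tight $m$-schedule for the induced precedence subgraph $G[S_{[1,T'-1]}]$: this is valid because every arc of $G$ with both endpoints in $S_{[1,T'-1]}$ is already respected by $\sigma$. Apply the transformations from the proof of Theorem~\ref{thm:sink-adjusted} to $\sigma_L$ to obtain a sink-adjusted schedule $\sigma'_L$ for $G[S_{[1,T'-1]}]$. Symmetrically, using Claim~\ref{claim:sym}, view $\dual{\sigma_R} := (S_T,\ldots,S_{T'+1})$ as a tight $m$-schedule for $\dual{G[S_{[T'+1,T]}]}$, sink-adjust it, and reverse the outcome to obtain $\sigma'_R$; then $\dual{\sigma'_R}$ is sink-adjusted. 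Finally, set $\sigma' := (\sigma'_L, S_{T'}, \sigma'_R)$.

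The key technical check is that the two transformations used in the proof of Theorem~\ref{thm:sink-adjusted}---cyclically moving an all-sink timeslot to the end (property (ii)) and swapping a late non-sink $j$ with a sink $j' \in Z_z$ at an early sink moment (property (i))---are pure permutations of jobs: they neither create nor destroy timeslots, and they do not alter the multiset of scheduled jobs. Moreover, since every timeslot of $\sigma_L$ has size exactly $m$, the sink moment $z$ arising in property (i) necessarily satisfies $|Z_z|>0$, so the ``move into an empty slot'' branch never triggers. This is the main (and modest) obstacle: once verified, the procedure applied to $\sigma_L$ and to $\dual{\sigma_R}$ produces sink-adjusted tight $m$-schedules of the same makespans $T'-1$ and $T-T'$ respectively, on the same job sets.

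It then remains to check that the concatenation $\sigma' = (\sigma'_L, S_{T'}, \sigma'_R)$ is a valid schedule for the full graph $G$. Tightness and the three equalities $S'_{T'}=S_{T'}$, $S'_{[1,T'-1]}=S_{[1,T'-1]}$, $S'_{[T'+1,T]}=S_{[T'+1,T]}$ are immediate from the construction. For any precedence arc $u\prec v$ of $G$: if $u$ and $v$ both lie in $S_{[1,T'-1]}$ or both lie in $S_{[T'+1,T]}$, validity follows because $\sigma'_L$ and $\sigma'_R$ are valid schedules for the corresponding induced subgraphs; otherwise, the tripartition places $u$ in a not-later block than $v$ (since $\sigma$ already respected the arc), and because the entire block $\sigma'_L$ precedes $S_{T'}$, which precedes the entire block $\sigma'_R$, the arc is respected in $\sigma'$. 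Middle-adjustedness at $T'$ is then immediate from the construction of $\sigma'_L$ and $\sigma'_R$.
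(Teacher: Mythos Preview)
Your proof is correct and follows essentially the same approach as the paper: apply Theorem~\ref{thm:sink-adjusted} separately to $\sigma_L$ on $G[S_{[1,T'-1]}]$ and to $\dual{\sigma_R}$ on $\dual{G[S_{[T'+1,T]}]}$, then concatenate with $S_{T'}$ in the middle. The paper invokes Theorem~\ref{thm:sink-adjusted} as a black box (optimality of $\sigma_L$ plus the job count forces the resulting sink-adjusted schedule to be tight of the same length), whereas you trace through the swap argument to verify tightness is preserved and add the explicit check that concatenation respects all cross-block precedence constraints; both routes are valid and yours simply fills in details the paper leaves implicit.
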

\begin{proof}
    Let $\sigma_L:= (S_1,\dots,S_{T'-1})$ and $\dual{\sigma_R}:= (S_{T },\dots,S_{T'+1})$. 
    By Theorem~\ref{thm:sink-adjusted}, there are tight $m$-schedules of the instances $G[V(\sigma_L)]$ and $G[V(\sigma_R)]$ (with precedence constraints reversed) of \sched that are sink-adjusted. Concatenating these schedules with $S_{T'}$ in between results in a middle-adjusted schedule.
\end{proof}

Our goal is to deduce the set of jobs processed at $\sigma_L$ and $\sigma_R$
based on the fact that our schedule is middle-adjusted and properties of the
vertices of $C$ with respect to the schedule. Since $C$ is small, we can guess
these properties with few guesses.  The aforementioned properties are formalized
in the following definition:

\begin{definition}[Fingerprint]
Let $\sigma=(\sigma_L,S_{T'},\sigma_R)$ be middle-adjusted schedule at $T'$. Let
\[
\begin{aligned}
C_L &\coloneqq V(\sigma_L) \cap C, \\
C_R &\coloneqq V(\sigma_R) \cap C, \\
\SinkVC_L &\coloneqq \Sinks(V(\sigma_L)) \cap C_L,\\
\SinkVC_R &\coloneqq \Sinks(V(\dual{\sigma_R})) \cap C_R,\\
\HighVC_L &\coloneqq \{ v \in C_L \setminus \Sinks(V(\sigma_L)): v \textnormal{ scheduled at a sink moment of $\sigma_L$}  \}, \\
\HighVC_R &\coloneqq \{ v \in C_R \setminus \Sinks(V(\dual{\sigma_R})): v \textnormal{ scheduled at a sink moment of $\dual{\sigma_R}$}  \}, \\  
\EarlyVC_L &\coloneqq \{ v \in C_L : v \textnormal{ is early in $\sigma_L$} \}, \\  
\EarlyVC_R &\coloneqq \{ v \in C_R : v \textnormal{ is early in $\dual{\sigma_R}$} \}. 
\end{aligned}
\]
We call $8$-tuple
$(C_L,C_R,\SinkVC_L,\SinkVC_R,\HighVC_L,\HighVC_R,\EarlyVC_L,\EarlyVC_R)$ the \emph{fingerprint} of $\sigma$.
\end{definition}

The following will be useful to bound the runtime of our algorithm and is easy to check by case analysis:
\begin{claim}
    There are at most $13^{|C|}$ different fingerprints.
\end{claim}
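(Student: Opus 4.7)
The plan is to show that each fingerprint is uniquely determined by assigning to each vertex $v \in C$ one of exactly $13$ possible labels encoding its role in $\sigma$; then the total number of fingerprints is trivially bounded by $13^{|C|}$.

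First, I would observe that $V(\sigma_L)$, $S_{T'}$, and $V(\sigma_R)$ partition $V(\sigma) = V$, so each $v \in C$ belongs to exactly one of $C_L$, $C_R$, or $C \cap S_{T'}$. In particular $C \cap S_{T'}$ is redundantly determined as $C \setminus (C_L \cup C_R)$, so the fingerprint encodes a choice among three ``sides'' for each $v \in C$.

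Next, I would refine the labeling on each side. For $v \in C_L$, the sets $\SinkVC_L$ and $\HighVC_L$ are disjoint by definition (vertices of $\HighVC_L$ are non-sinks while vertices of $\SinkVC_L$ are sinks), so $v$ lies in exactly one of the three classes $\SinkVC_L$, $\HighVC_L$, or $C_L \setminus (\SinkVC_L \cup \HighVC_L)$. Independently, $v$ is either in $\EarlyVC_L$ or not, giving $3 \cdot 2 = 6$ labels for left-side vertices. By symmetry, there are $6$ labels for right-side vertices. Finally, a vertex $v \in C \cap S_{T'}$ belongs to none of $\SinkVC_L, \SinkVC_R, \HighVC_L, \HighVC_R, \EarlyVC_L, \EarlyVC_R$, contributing a single additional label.

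This yields $6 + 6 + 1 = 13$ possible labels per vertex of $C$. Since each coordinate of the fingerprint is a subset of $C$ that can be recovered from the per-vertex labels, the map from label assignments to fingerprints is surjective, so the number of fingerprints is at most $13^{|C|}$. The argument is essentially a clean case analysis; the only step requiring care is to verify that the three classes $\SinkVC_\bullet$, $\HighVC_\bullet$, and the leftover $C_\bullet \setminus (\SinkVC_\bullet \cup \HighVC_\bullet)$ truly partition each $C_\bullet$, which follows directly from the definitions.
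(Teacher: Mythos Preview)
Your proposal is correct and follows essentially the same approach as the paper: both argue that each vertex of $C$ falls into one of $13$ mutually exclusive cases ($1$ for $C_M$, and $3 \cdot 2 = 6$ each for $C_L$ and $C_R$), yielding the bound $13^{|C|}$. Your write-up is slightly more explicit about why the classes partition $C_L$ and why the labeling determines the fingerprint, but the underlying case analysis is identical.
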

\begin{proof}
    Let $e \in C$. If $e \in C_M$ it cannot be in any of the other sets. If $e
    \in C_L$, it can be in $\HighVC_L$ and $\SinkVC_L$, but not in both.
    Additionally, independently it could be in $\EarlyVC_L$. Thus, there are
    $3\cdot2=6$ possibilities (see $C_L$ cell in Figure~\ref{fig:composition}).
    Similarly, there are $6$ possibilities if $e \in C_R$. Thus in total there
    are $1+6+6=13$ possibilities per element in $C$. 
\end{proof}

\usetikzlibrary{patterns.meta}
\usetikzlibrary{patterns}

\usetikzlibrary{decorations.pathreplacing,angles,quotes}
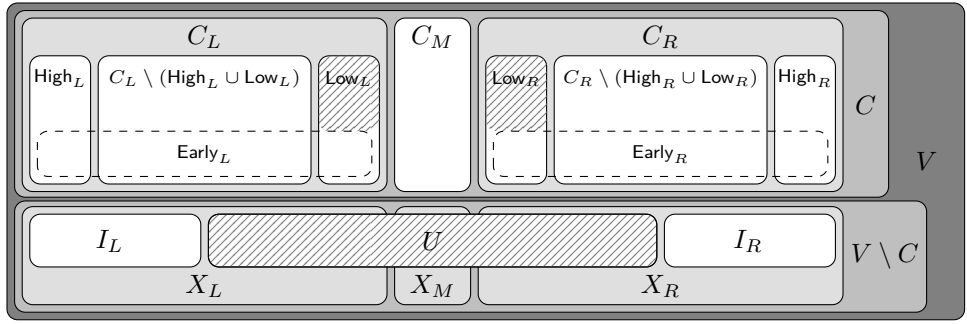
\begin{figure}
\centering
 \begin{tikzpicture} [scale = 1] 
    \draw[rounded corners, fill = gray] (12.5,2.1) rectangle (-.1,-2.1);
    \node at (12.,0) {$V$};
    
    \draw[rounded corners, fill=gray!50!white] (0,-.475) rectangle (11.5,2);
    \node at (11.2,.75) {$C$};

    \draw[rounded corners, fill = gray!25!white] (0.1,-.4) rectangle (4.9,1.9);
    \node at (2.5,1.65) {$C_L$};
    
    \draw[rounded corners, fill = white] (0.2,-.3) rectangle (1.0,1.4);
    \node at (.6,1.1) {\scriptsize $\HighVC_L$};
    
    \draw[rounded corners, fill = white] (1.1,-.3) rectangle (3.9,1.4);
    \node at (2.5,1.1) {\scriptsize $C_L\setminus (\HighVC_L \cup \SinkVC_L)$};
    
    \draw[rounded corners, fill = white] (4.0,-.3) rectangle (4.8,1.4);
    \fill[rounded corners, pattern=north east lines, pattern color =gray] (4,.4) rectangle (4.8,1.4); 
    \node at (4.4,1.1) {\scriptsize $\SinkVC_L$};

    \draw[rounded corners, fill = white] (5,-.4) rectangle (6,1.9);
    \node at (5.5,1.65) {$C_M$};

    \draw[rounded corners, fill = gray!25!white] (6.1,-.4) rectangle (10.9,1.9);
    \node at (8.5,1.65) {$C_R$};

    \draw[rounded corners, fill = white] (6.2,-.3) rectangle (7.0,1.4);
     \fill[rounded corners, pattern=north east lines, pattern color =gray] (6.2,.4) rectangle (7.0,1.4); 
    \node at (6.6,1.1) {\scriptsize $\SinkVC_R$};
    
    \draw[rounded corners, fill = white] (7.1,-.3) rectangle (9.9,1.4);
    \node at (8.5,1.1) {\scriptsize $C_R\setminus (\HighVC_R \cup \SinkVC_R)$};
    
    \draw[rounded corners, fill = white] (10.0,-.3) rectangle (10.8,1.4);
    \node at (10.4,1.1) {\scriptsize $\HighVC_R$};

    \draw[rounded corners, fill = gray!50!white] (0,-.525) rectangle (12,-2);
    \node at (11.45,-1.25) {$V\setminus C$};
    
    \draw[rounded corners, fill = gray!25!white] (0.1,-0.6) rectangle (4.9,-1.9);
    \node at (2.5,-1.65) {$X_L$};
    
    \draw[rounded corners, fill = white] (0.2,-0.7) rectangle (2.45,-1.4);
    \node at (1.25,-1.05) {$I_L$};

    \draw[rounded corners, fill = gray!25!white] (5,-0.6) rectangle (6,-1.9);
    \node at (5.5,-1.65) {$X_M$};
        
    \draw[rounded corners, fill = gray!25!white] (6.1,-0.6) rectangle (10.9,-1.9);
    \node at (8.5,-1.65) {$X_R$};

    \draw[rounded corners, fill = white] (8.55,-0.7) rectangle (10.8,-1.4);
    \node at (9.65,-1.05) {$I_R$};
    
    \draw[rounded corners, fill = white] (2.55,-0.7) rectangle (8.45,-1.4);
    \draw[rounded corners, pattern=north east lines, pattern color =gray] (2.55,-0.7) rectangle (8.45,-1.4); 
    \node at (5.5,-1.05) {$U$};
    
    \draw[rounded corners, dashed] (0.3,-.2) rectangle (4.7,.4);
    \node at (2.5,.1) {\scriptsize $\EarlyVC_L$};
    
    \draw[rounded corners, dashed] (6.3,-.2) rectangle (10.7,.4);
    \node at (8.5,.1) {\scriptsize $\EarlyVC_R$};

    \end{tikzpicture}   
    \caption{Venn diagram of the sets often used in Section~\ref{sec:VC}. Recall that by definition $I_L = \Pred(C_L)\setminus C_L$ and $I_R = \Succ(C_R)\setminus C_R$. The dashed area is equal to $U'$, defined in Subsection~\ref{subsec:divide}.}
    \label{fig:composition}
\end{figure}

\subsection{The algorithm}

An overview of the algorithm is described in Algorithm~\ref{algorithm1}. It is
given a precedence graph $G$, number of machines $m$, and a vertex cover $C$ of
$\Gc$ as input. The Algorithm outputs a tight $m$-schedule if it exists, and
``False'' otherwise.

\let\oldnl\nl
\newcommand{\nonl}{\renewcommand{\nl}{\let\nl\oldnl}}
\begin{algorithm}[h!] 
\nonl\noindent\textbf{Algorithm} $\mathtt{schedule}(G,C,m)$ \\ 
\SetAlgoLined
\ForEach{$T' \in [1,T]$}{
\label{line:fingerprint}\ForEach{\textnormal{fingerprint}
$f=(C_L,C_R,\SinkVC_L,\SinkVC_R,\HighVC_L,\HighVC_R,\EarlyVC_L,\EarlyVC_R)$}{
\label{line:vccheck}\If{$|C_L|, |C_R| \le |C|/2$}{
$(X_L,X_M,X_R) \gets \mathtt{divide}(G,m,T',C,f)$\\
$\sigma_L \gets \mathtt{schedule}(G[C_L \cup X_L],C_L,m)$\\
$\sigma_R \gets \mathtt{schedule}(G[C_R \cup X_R],C_R,m)$\\
\If{$\sigma = (\sigma_L, C_M \cup X_M, \sigma_R)$ \textnormal{is a tight $m$-schedule for $G$}}{\label{line:check}
\Return $\sigma$\label{line:output}
}
}
}
}
\Return False
\caption{Algorithm for Theorem~\ref{thm:VC}.}
\label{algorithm1}
\end{algorithm}

The first step of the algorithm is to guess integer $T' \in [T]$ such that at
most half of the jobs from $C$ are processed before $T'$ and at most half of the jobs
from $C$ are processed after $T'$. 
Subsequently, we guess the fingerprint $f$ of a middle-adjusted schedule $(\sigma_L,S_{T'},\sigma_R)$
Effectively, we guess for every job in $C$ whether it is processed in
$\sigma_L$, at $T'$ or in $\sigma_R$, and whether it is in $\SinkVC$, $\HighVC$ and $\EarlyVC$.

If we have guessed correctly, then we can deduce that jobs $\Pred(C_L)\setminus
C_L$ must be in $\sigma_L$ and the jobs in $\Succ(C_R)\setminus C_R$ are in
$\sigma_R$.  We are not done yet, as the position of the remaining jobs from $V
\setminus C$ is still not known. To solve this, we employ a subroutine
$\mathtt{divide}$ that tells us for all jobs in $V \setminus C$ whether they are
scheduled in $\sigma_L$, at $T'$ or $\sigma_R$, by making use of the
fingerprint. Formally: 

\begin{lemma}
    \label{claim:reconstruction}
    There is a polynomial time algorithm $\mathtt{divide}$ that, given as input
    precedence graph $G$, integers $m,T' \in \mathbb{N}$, vertex cover $C$ of $\Gc$, and a fingerprint
    \[
            f=(C_L,C_R,\SinkVC_L,\SinkVC_R,\HighVC_L,\HighVC_R,\EarlyVC_L,\EarlyVC_R),
    \] 
    finds a partition $X_L,X_M,X_R$ of $V \setminus C$ with the following property:
    If $f$ is the fingerprint of a tight $m$-schedule $\sigma$ of $G$ that is middle-adjusted at time $T'$, then $G[C_L \cup X_L]$ and $G[C_R \cup X_R]$ have tight $m$-schedules, $|X_M \cup (C \setminus (C_L\cup C_R))| = m$, $\Pred(C \setminus C_R) \subseteq C_L \cup X_L$, and $\Succ(C \setminus C_L) \subseteq C_R \cup X_R$.
\end{lemma}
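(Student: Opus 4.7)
The plan is to identify the vertices of $V \setminus C$ whose placement into $X_L$, $X_M$ or $X_R$ is forced by precedence, and then to handle the remaining ``undetermined'' vertices via a bipartite matching between them and the open slots of the putative subschedules.

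First, I would compute
\[
F_L \coloneqq \{v \in V \setminus C : \Succ(v) \cap (C_L \cup C_M) \neq \emptyset\}, \quad F_R \coloneqq \{v \in V \setminus C : \Pred(v) \cap (C_M \cup C_R) \neq \emptyset\}.
\]
A vertex $v \in F_L$ has a successor scheduled at time $\leq T'$, so $v$ itself must be scheduled strictly before $T'$ and is thus forced into $X_L$; analogously $F_R$ is forced into $X_R$. If $F_L \cap F_R \neq \emptyset$, no schedule with fingerprint $f$ can exist and I would return an arbitrary partition. Otherwise the undetermined set $U \coloneqq (V \setminus C) \setminus (F_L \cup F_R)$ consists of vertices $u$ with $\Pred(u) \subseteq C_L$ and $\Succ(u) \subseteq C_R$. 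Since $V \setminus C$ is an independent set in $\Gc$, the vertices of $U$ are pairwise incomparable; depending on its placement, $u$ becomes a sink of $\sigma_L$, a source of $\sigma_R$, or sits at the middle time $T'$.

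From the fingerprint I would then reconstruct the ``skeleton'' of $\sigma_L$. Using $\HighVC_L$, $\SinkVC_L$, $\EarlyVC_L$ and Properties~\ref{prop:Q1}--\ref{prop:Q3}, each job of $C_L \cup F_L$ is localized either to a specific sink moment or to a bin between two consecutive sink moments; the tightness constraint $|C_L| + |X_L| = m(T'-1)$ then fixes, bin by bin, the number of vertices of $U$ that must be absorbed by $\sigma_L$. A symmetric computation handles $\sigma_R$, while $X_M$ offers exactly $m - |C_M|$ open slots. I would then build a bipartite graph $H$ whose left side is $U$ and whose right side is the multiset of open slots, adding an edge $(u, \text{slot})$ whenever the depth $\depth{\HighVC_L}{u}$ (resp.\ $\depth{\HighVC_R}{u}$) is consistent with the bin of the slot; middle slots and trailing all-sink slots are adjacent to every $u$. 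A perfect matching in $H$, computed in polynomial time, defines the output partition $X_L$, $X_M$, $X_R$.

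For correctness, conditions~(3) and~(4) of the lemma hold by construction of $F_L$ and $F_R$, and condition~(2) on sizes is enforced by the slot counts. A perfect matching exists whenever a valid $\sigma$ does, because $\sigma$ itself assigns each $u \in U$ to a depth-consistent slot and so induces one. Conversely, given any perfect matching, a tight $m$-schedule of $G[C_L \cup X_L]$ is assembled by placing each matched $u$ into its designated slot: all predecessors of $u$ lie in $C_L$ at strictly earlier sink moments, $u$ has no successor in $C_L \cup X_L$, and the vertices of $X_L \cap U$ are pairwise incomparable, so no precedence constraint is violated. The same argument applies to $\sigma_R$. The main obstacle will be the bookkeeping required to turn the fingerprint into a list of bins with exact slot capacities: one must carefully track the interaction between $\SinkVC$ members and their Early/Late flags (since sinks may appear both at sink moments and at trailing all-sink slots), handle the boundary bins of $\sigma_L$ and $\sigma_R$, and verify that the resulting capacities equal the number of depth-consistent undetermined vertices in any valid $\sigma$, so that Hall's condition is automatically satisfied via the existence of $\sigma$.
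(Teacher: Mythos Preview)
Your high-level plan coincides with the paper's: force the obviously determined vertices of $V\setminus C$, then assign the undetermined ones by a bipartite matching against the open slots of the two sink-adjusted subschedules. Two concrete points, however, are not yet handled correctly.

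First, the left side of your bipartite graph is too small. You match only the set $U\subseteq V\setminus C$, but the late vertex-cover sinks $\SinkVC_L\setminus\EarlyVC_L$ (and symmetrically $\SinkVC_R\setminus\EarlyVC_R$) are \emph{not} localized by the fingerprint: a late $v\in\SinkVC_L$ can sit at any sink moment from $z(\depth{\HighVC_L}{v}+2)$ onward or in the trailing all-sink region, so its position is undetermined. These jobs compete with the $U$-vertices for exactly the same slots, and therefore the slot capacities you want to compute ``bin by bin'' are not well-defined until these late sinks are placed too. The paper resolves this by matching the enlarged set $U'=U\cup(\SinkVC_L\setminus\EarlyVC_L)\cup(\SinkVC_R\setminus\EarlyVC_R)$.

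Second, your edge criterion ``$\depth{\HighVC_L}{u}$ is consistent with the bin'' is too permissive. Take $u\in U$ with a predecessor $v\in\SinkVC_L$; you have $\Pred(u)\subseteq C_L$, so nothing in your forcing step excludes $u$ from $X_L$, and the depth of $u$ with respect to $\HighVC_L$ ignores $v$ entirely. Yet if $u$ were placed in $X_L$, then $v$ would no longer be a sink of $G[C_L\cup X_L]$, destroying the skeleton you built from the fingerprint, and your feasibility argument (``all predecessors of $u$ lie in $C_L$ at strictly earlier sink moments'') breaks. The paper handles this by computing for each $u\in U'$ an explicit interval $[l_u,r_u]$ from several separate lower and upper bounds, one of which ($l_u^3$) forces $l_u\geq T'$ whenever $u\in\Succ(\SinkVC_L)$; only then does the matching-based construction go through. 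Your sketch already anticipates that the $\SinkVC$ bookkeeping is the ``main obstacle'', but it needs to enter the algorithm itself, not just the analysis.
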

This lemma will be proved in the next subsection.

With the partition of $V \setminus C$ into $X_L,X_M,X_R$ in hand, we can solve
the associated two subproblems with substantially smaller vertex covers $C_L$
and $C_R$ recursively. If the combination results in a tight $m$-schedule we
return it, and if such a schedule is never found we return ``False'' .  This
concludes the description of the algorithm, except the description of the
subroutine $\mathtt{divide}$.

\subparagraph*{Run time analysis.}
There are $13^{|C|}$ guesses for fingerprint $f$ in Algorithm~\ref{algorithm1}.
Additionally, there are at most $n$ possible guesses of $T'$. After all
guesses are successful, then in polynomial time we determine the set of jobs in
$X_L$, $X_M$ and $X_R$ by~\cref{claim:reconstruction} and with that, the jobs for the two subproblems: $C_L\cup X_L$ and $C_R\cup X_R$. Subsequently, we
recurse, and solve these two instances of \sched: one with jobs $C_L\cup X_L$ and one with
$C_R\cup X_R$. Observe that by definition $C_L$ is a vertex cover of $C_L\cup X_L$ and $C_R$ is a vertex cover of $C_R\cup X_R$. Moreover $|C_L|,|C_R| \le |C|/2$. Therefore, the
total runtime $T(|C|)$ of the algorithm is bounded by:
\begin{displaymath}
    T(|C|) \le 13^{|C|} \cdot T\left(\frac{|C|}{2}\right) \cdot  n^{\Oh(1)}.
\end{displaymath}
Therefore, the total runtime of the algorithm is $T(|C|) \le \Os(169^{|C|})$ as claimed.

\subparagraph{Correctness.}
We claim that $\mathtt{schedule}(G,C,m)$ returns a tight $m$-schedule if it exists, and that it returns ``False'' otherwise. Note that Algorithm~\ref{algorithm1} checks for feasibility in Line~\ref{line:check}, so if  there is no tight $m$-schedule it will always return ``False''.

Thus, let us focus on the first part.
Let $(S_1,\ldots,S_T)$ be a tight $m$-schedule.
Let $T'$ be the smallest integer such that $|S_{[1,T']} \cap C| \geq |C|/2$.
Then by Lemma~\ref{lemma:composition}, there is a tight $m$-schedule $\sigma=
(\sigma_L,S_{T'},\sigma_R)$ that is middle adjusted at time $T'$ such that
$V(\sigma_L)=S_{[1,T'-1]}$.
Consider the iteration of the loop at Line~\ref{line:fingerprint} where we pick the fingerprint of $\sigma$. 
By the choice $T'$ we have that $|V(\sigma_L) \cap C|, |V(\sigma_R) \cap C| \leq |C|/2$, and hence the check at Line~\ref{line:vccheck} is verified. Let $C_M = C \setminus (C_L \cup C_R)$.
By Lemma~\ref{claim:reconstruction}, we find $X_L,X_M,X_R$ such that $|X_M \cup C_M| = m$, there is a tight $m$-schedule $\sigma_L'$ for $G[C_L \cup X_L]$ and a tight $m$-schedule $\sigma_R'$ for $G[C_R \cup X_R]$. We claim that $\sigma'=(\sigma_L',(X_M\cup C_M),\sigma_R')$ is a tight $m$-schedule and hence it will be output at Line~\ref{line:output}. To see this, note we only need to check whether precedence constraints between vertices from different parts of the partition $V(\sigma_L'), X_M\cup C_M, V(\sigma_R')$ are satisfied. Let $v \preceq w$ be such a constraint. Note that either $v \in C$ or $w \in C$ (or both), since $C$ is a vertex cover of $\Gc$. If $v \in C$ then the constraint $v \prec w$ is satisfied since $v \in C_L$ or $\Succ(v) \subseteq V(\sigma_R)$ by Lemma~\ref{claim:reconstruction}. Similarly, if $w \in C$ then the constraint $v \prec w$ is satisfied since $w \in C_R $ or $\Pred(v) \subseteq V(\sigma_L)$.
Thus $\sigma'$ is a tight $m$-schedule and the correctness follows.

\subsection{Dividing the jobs: The proof of  Lemma~\ref{claim:reconstruction} } \label{subsec:divide}

In this subsection we prove Lemma~\ref{claim:reconstruction}. Let us assume
that $f$ is a fingerprint of a middle-adjusted schedule $\sigma \coloneqq
(\sigma_L,S_{T'},\sigma_R)$ at $T'$ (hence $\sigma_L$ and $\dual{\sigma_R}$ are
both sink-adjusted).

First of all, we can deduce that jobs in $I_L \coloneqq \Pred(C_L)\setminus
C_L$ must be processed in $\sigma_L$ because their successors are in
$\sigma_L$. Similarly every job in $I_R \coloneqq \Succ(C_R)\setminus C_R$
needs to be in $\sigma_R$. It remains to assign jobs in 
$U \coloneqq V \setminus (C \cup I_R \cup I_L)$. For this, we will actually assign jobs from $U'$ using a perfect matching on a bipartite graph, where
\[ U' \coloneqq U \cup (\SinkVC_L\setminus \EarlyVC_L) \cup (\SinkVC_R \setminus \EarlyVC_R).
\]
We show that for the jobs that are not in $U'$, we know roughly where they are using the fingerprint and Properties~\ref{prop:Q1}, \ref{prop:Q2} and \ref{prop:Q3} for schedules $\sigma_L$ and $\dual{\sigma_R}$. 

We will determine where the jobs from $U'$ go using a perfect
matching on a bipartite graph $H = ((U',P),F)$. The set $P\subset 
[T]\times[m]$ consists of positions at which the jobs of $U'$ are processed in $\sigma$ and an edge 
$(u,(t,j)) \in F$ will indicate that $u\in U'$ can be processed at time $t \in [T]$. The `$j$' indicates that it is the $j$th machine that will process the job. 

We will claim later that we can independently determine for each job in $U'$
whether it can be processed at a specific position in $P$. As such, finding
a perfect matching of graph $H$ will determine the position of each job in
$U'$. Note that jobs in $U'$ need not be assigned at their positions in $\sigma$ with this method, but they will be assigned at a position that will make an $m$-tight schedule.

\subparagraph*{Construction of $P$.}
To construct this bipartite graph, we first find the set of possible
positions $P$ where jobs from $U'$ are processed.  At $T'$ the jobs from
$C_M$ are processed, so there are $m-|C_M|$ jobs from $U'$ processed there.
We add positions $(T',j)$ for $j\in [m-|C_M|]$ to $P$.
    
Let us now define the positions in $P$ for $t<T'$, i.e. the positions in
$\sigma_L$. Let $z_L$ be the first timeslot in $\sigma_L$ at which only sinks
are processed. Since all jobs from $U'$ are sinks in $\sigma_L$, they can only
be processed at a sink moment of $\sigma_L$ or at or after $z_L$. Hence, to find
the correct positions, we need the value of $z_L$ and the number of jobs from
$U'$ at each sink moment of $\sigma_L$.  For this we first define blocks:
\begin{definition}
Let $z(1),\dots,z(\ell)$ be the sink moments of $\sigma_L$.
Then for $i \in
[1,\ell]$ we define the \emph{$i$th block} $B_i \coloneqq  [z(i-1)+1,z(i)]$ and we let $B_{\ell+1} = [z(\ell)+1,T'-1]$. Recall that $z(0) = 0$.
The \emph{length} of a block $[l,r]$ is defined as $r-l+1$ (i.e., the length of the interval).
\end{definition}
 We will show that for many jobs, we can determine in which block they are processed.

\begin{claim} \label{claim:blocks}
Let $\sigma$ be a middle-adjusted tight $m$-schedule. Given as input the precedence graph $G$, integer $m$, and fingerprint $f$ of $\sigma$ we can determine in polynomial time:
\begin{enumerate}[(1)]
    \item\label{item1} for $v \in \HighVC_L \cup (\SinkVC_L \cap \EarlyVC_L)$ at which time they are processed, and
    \item\label{item2} for $v \in \Pred[C_L]\setminus(\SinkVC_L \setminus \EarlyVC_L)$ at which block they are processed,
    \item\label{item3} the length of each block,
    \item\label{item4} the value of $z_L$.
\end{enumerate}
\end{claim}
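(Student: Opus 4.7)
The plan is to compute $\depth{\HighVC_L}{v}$ for every relevant vertex in polynomial time (a standard longest-path computation in the DAG $G[\HighVC_L\cup\{v\}]$) and then to apply Properties~\ref{prop:Q1},~\ref{prop:Q2}, and~\ref{prop:Q3} to the sink-adjusted schedule $\sigma_L$, reading off every quantity from depth plus fingerprint data.

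For~(\ref{item1}), Property~\ref{prop:Q1} restricts each $v\in\HighVC_L$ to one of $z(\depth{\HighVC_L}{v}+1)$ or $z(\depth{\HighVC_L}{v}+2)$, and membership in $\EarlyVC_L$ (recorded in the fingerprint) selects the former in the early case and the latter in the late case via Definition~\ref{def:early-job}; for $v\in\SinkVC_L\cap\EarlyVC_L$ that definition directly places $v$ at $z(\depth{\HighVC_L}{v}+1)$. Once~(\ref{item3}) and~(\ref{item4}) deliver the $z(i)$'s, these expressions become concrete time slots. For~(\ref{item2}), I combine this with Property~\ref{prop:Q2} (each $v\in C_L\setminus(\HighVC_L\cup\SinkVC_L)$ lies in block $\depth{\HighVC_L}{v}+1$ if early and $\depth{\HighVC_L}{v}+2$ if late) and Property~\ref{prop:Q3} (each $v\in I_L\coloneqq\Pred(C_L)\setminus C_L$ lies in block $\depth{\HighVC_L}{v}+1$). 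Since
\[
\Pred[C_L]\setminus(\SinkVC_L\setminus\EarlyVC_L)=\HighVC_L\cup(\SinkVC_L\cap\EarlyVC_L)\cup\bigl(C_L\setminus(\HighVC_L\cup\SinkVC_L)\bigr)\cup I_L,
\]
this assigns every such job a unique block.

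For~(\ref{item3}) and~(\ref{item4}), I first argue that the non-sinks of $V(\sigma_L)$ are exactly $(C_L\setminus\SinkVC_L)\cup I_L$: every $I_L$-vertex has a successor in $C_L$ by definition, whereas any $u\in U\cap V(\sigma_L)$ has $u\notin C$, so by Claim~\ref{claim:chain} every successor of $u$ lies in $C$; a successor in $V(\sigma_L)\cap C=C_L$ would force $u\in I_L$, contradicting $u\in U$, hence $u$ is a sink in $V(\sigma_L)$. Thus~(\ref{item2}) yields the exact number $\hat N_i$ of non-sinks in each block $B_i$. Within $B_i$ for $i\le\ell$, every slot strictly before $z(i)$ is before $z_L$ and not a sink moment, so $|H_t|=m$ (all non-sinks), while the sink moment $z(i)$ itself contains $|H_{z(i)}|\in[1,m-1]$ non-sinks; therefore $\hat N_i=m(\ell_i-1)+|H_{z(i)}|$, which forces $\ell_i=\ceil{\hat N_i/m}$ and $|H_{z(i)}|=\hat N_i\bmod m$, and summing gives $z(i)=\sum_{j\le i}\ell_j$. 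For the trailing block $B_{\ell+1}$, every slot before $z_L$ is entirely non-sinks and every slot from $z_L$ onward is entirely sinks, so $m\mid\hat N_{\ell+1}$ and $z_L=z(\ell)+\hat N_{\ell+1}/m+1$; the number $\ell$ itself is recovered as the count of blocks whose non-sink total is nonzero modulo $m$.

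The main obstacle is ensuring that~(\ref{item2}) really delivers the exact non-sink count $\hat N_i$ of every block for any middle-adjusted tight $m$-schedule with fingerprint $f$. The subtle case is $I_L$: a vertex $v\in I_L$ with $\depth{\HighVC_L}{v}=i-1$ can in principle sit either strictly inside block $B_i$ or at its closing sink moment $z(i)$, so the fingerprint alone does not pin down $|H_{z(i)}|$. The resolution is that both possibilities still lie in $B_i$, so $\hat N_i$ is unaffected by this ambiguity; the modular identity $|H_{z(i)}|=\hat N_i\bmod m$ then recovers the block length without having to resolve where each individual $I_L$-vertex sits.
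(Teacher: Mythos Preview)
Your proposal is correct and follows essentially the same route as the paper's proof: compute block indices from depths via Properties~\ref{prop:Q1}--\ref{prop:Q3} and the early/late bits in the fingerprint, then recover block lengths (and hence the $z(i)$'s and $z_L$) by counting how many already-placed jobs land in each block. The one minor variation is that you count only the non-sinks $\hat N_i$ per block, whereas the paper counts $n_i$, the jobs of $\Pred[C_L]\setminus(\SinkVC_L\setminus\EarlyVC_L)$ per block (which additionally includes the $\SinkVC_L\cap\EarlyVC_L$ jobs sitting at $z(i)$); both choices lie in $[m(\ell_i-1)+1,\,m\ell_i]$, so both yield $\ell_i=\lceil\cdot/m\rceil$, and your version has the mild advantage that $\hat N_i\bmod m$ directly returns $|H_{z(i)}|$ and cleanly distinguishes blocks $1,\dots,\ell$ from block $\ell+1$.
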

\begin{proof}
For each job in $\HighVC_L$ we know whether it is early or
late, so using Property~\ref{prop:Q1} we know the exact sink moment it is
processed, and as a consequence also in which block. 
For a job in $\SinkVC_L \cap \EarlyVC_L$, we know by
Definition~\ref{def:early-job} at which sink moment it is processed and as a
consequence also in which block. Thus, to establish Item~(\ref{item1}) we only
need to determine when all sink moments are exactly (or in other
words the length of each block).

For jobs in $C_L
\setminus (\HighVC_L \cup \SinkVC_L)$, we know whether it is early or late and we
use Property~\ref{prop:Q2} to find in which block it is processed. 
Recall that $I_L \coloneqq \Pred(C_L)\setminus C_L$, so $I_L \subseteq V(\sigma_L) \setminus (C_L \cup \Sinks(V(\sigma_L))$ as all
    jobs in $I_L$ are not in $C_L$ and they have some successor in $C_L$. Hence
    for any job in $I_L$, Property~\ref{prop:Q3} tells us exactly in which block
    it is processed. This
concludes the proof of Item~(\ref{item2}). 

Note that, by Item~(\ref{item2}), all jobs from $V(\sigma_L)$ for which we
have not determined the block in which they are processed yet are all sinks
in $\sigma_L$. Recall that $z_L$ is the first time slot such that $S_{z_L}
\subseteq \Sinks(V(\sigma_L))$. Hence sinks from $\sigma_L$ can only be processed at sink moments of $\sigma_L$ or after or at $z_L$.
Therefore, for each block $B_i$ with $i \le \ell$ the only jobs that have
not been assigned to it are at the sink moment $z(i)$. Hence, we can determine
the length of each block as follows: If $n_i$ is
the number of jobs from $\Pred[C_L]\setminus(\SinkVC_L \setminus \EarlyVC_L)$ in
block $i$, then the length of block $i$ must be $\lceil n_i/m \rceil$. As a
consequence, we do not only know at which sink moment the jobs from $\HighVC_L$
and $\SinkVC\cap \EarlyVC$ are processed, but also at which time. This
established Item~(\ref{item1}) and Item~(\ref{item3}).

Finally, for Item~(\ref{item4}), we can compute the value $z_L$ by computing how many jobs from $\Pred[C_L]\setminus(\SinkVC_L \setminus \EarlyVC_L)$ are processed in the $(\ell+1)$th block; if the number of such jobs is $x$ then $z_L$ will be equal to $z(\ell) +\lceil x / m \rceil$, by the same reasoning as above.
\end{proof}

We need to decide for each vertex in 
$U'$ whether it is scheduled in $\sigma_L$,  at $T'$ or in $\sigma_R$. Note that the set $U' \cap V(\sigma_L)$ is equal to the set of jobs for which we do not know  by Claim~\ref{claim:blocks} at which block they are processed. 
As a consequence, if $u \in U'$ is processed in $\sigma_L$, then it is a sink and it can only be processed at a sink moment or after or at $z_L$. 

Let $z(i)$ be a sink moment of $\sigma_L$, we will describe how to compute $|S_{z(i)}\cap U'|$, i.e. the number of positions that we need to create in the bipartite graph for time $z(i)$. Claim~\ref{claim:blocks} gives the number of non-$U'$ jobs within that block, say $n_i$. Hence, the number of positions at $z(i)$ for jobs from $U'$ is equal to $(m - n_i )\bmod m$.
Therefore, we add $(z(i),j)$ to $P$ for all $j \in [(m - n_i )\bmod m]$.

For all $t \in [z_L,T'-1]$ we create positions $(t,j)$ for every $j\in[m]$;
each of these moments only contains sinks of $\sigma_L$. Note that all jobs
processed at or after $z_L$ are jobs from $U'$, as any job in $\SinkVC_L\cap
\EarlyVC_L$ is processed at some sink moment by definition of $\EarlyVC$. \\
    
For the positions $t > T'$ in $P$, we can use the same strategy. Note that
by symmetry Claim~\ref{claim:blocks} holds also for $\dual{\sigma_R}$. This way,
we can find all possible positions for jobs of $U'$ in a middle-adjusted
schedule $\sigma$ in polynomial time, given $m$, the input graph and the
fingerprint $f$ of $\sigma$. 

\subparagraph*{Construction of edges $F$.}   
To define the edges of the bipartite graph $H$ and prove that any perfect matching on this bipartite graph relates to a feasible schedule, we will use the following claim.

\begin{claim} \label{claim:intervals}
Given $T'$, the fingerprint $f$ and precedence graph $G$, we can determine in polynomial time for each $v \in U'$ an interval $[l_v,r_v]$ such that 
\begin{enumerate}[(1)]
    \item $\Pred(v) \setminus U'$ is scheduled before $l_v$,
    \item $\Succ(v) \setminus U'$ is scheduled after $r_v$,
    \item  $v$ is scheduled in interval $[l_v,r_v]$ in $\sigma$.
\end{enumerate}
Furthermore if $u,v \in U'$ and $u \prec v$ then $r_u < l_v$. Finally, if $u \in U'$, $v \in C_M$ and $u\prec v$ then $r_u < T'$ and similarly if $u \in U'$, $v \in C_M$ and $v\prec u$ then $T' < l_u$.
\end{claim}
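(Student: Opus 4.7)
The plan is to build the intervals in two steps: first from the scheduled positions of non-$U'$ neighbors of $v$, then tightened by $v$'s type in $U'$. Concretely, for each $u \in \Pred(v) \cup \Succ(v)$ outside $U'$ we have $u \in C \cup I_L \cup I_R$, and Claim~\ref{claim:blocks} (applied to $\sigma_L$ directly, or by symmetry to $\dual{\sigma_R}$) plus the trivial case $u \in C_M \Rightarrow u$ at $T'$ yields in polynomial time an interval $[a_u,b_u]$ containing $u$'s time in $\sigma$. I would set
\[
l_v := \max\bigl(\{1\} \cup \{b_u+1 : u \in \Pred(v)\setminus U'\}\bigr), \qquad r_v := \min\bigl(\{T\} \cup \{a_u-1 : u \in \Succ(v)\setminus U'\}\bigr),
\]
and then apply the following tightenings: $r_v \leftarrow \min(r_v,T'-1)$ if $v \in \SinkVC_L \setminus \EarlyVC_L$; $l_v \leftarrow \max(l_v,T'+1)$ if $v \in \SinkVC_R \setminus \EarlyVC_R$; $l_v \leftarrow \max(l_v, T')$ if $v$ has any predecessor in $\SinkVC_L \setminus \EarlyVC_L$; and $r_v \leftarrow \min(r_v, T')$ if $v$ has any successor in $\SinkVC_R \setminus \EarlyVC_R$.

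Properties~(1) and (2) are immediate from the definition of $l_v$ and $r_v$. For~(3), the $v$-type and neighbor-type tightenings place $v$ on the correct side of $T'$ (using that a $\SinkVC_L$-job is a sink of $\sigma_L$ and so admits no successors in $V(\sigma_L)$, and symmetrically for $\SinkVC_R$), while the bound $b_u < \text{time}_\sigma(v)$ for each $u \in \Pred(v) \setminus U'$ follows from $u \prec v$. The only delicate subcase is when $u$ and $v$ share a block of $\sigma_L$: a vertex-cover argument forces $v \in U' \cap V(\sigma_L)$ to be a sink of $\sigma_L$, hence scheduled at the block's sink moment, while $u \prec v$ forces $u$ strictly earlier within the block. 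This lets us sharpen $b_u$ from the generic ``block right-end'' to a strictly smaller value, using the fingerprint's partition of $u$'s type via Properties~\ref{prop:Q1}--\ref{prop:Q3}.

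For the ``furthermore'' statement, let $u \prec v$ with $u,v \in U'$. The vertex-cover property forces at least one of $u,v$ to lie in $C \cap U' = (\SinkVC_L \setminus \EarlyVC_L) \cup (\SinkVC_R \setminus \EarlyVC_R)$. A short structural analysis --- using that $\SinkVC_L$-jobs are sinks of $\sigma_L$, $\SinkVC_R$-jobs are sources of $\sigma_R$, and that $u \in U$ with a successor in $C_L$ forces $u \in I_L$, contradicting $u \in U$ (and symmetrically for $v$) --- rules out all but three subcases: (A) $u \in \SinkVC_L, v \in U$; (B) $u \in \SinkVC_L, v \in \SinkVC_R$; (C) $u \in U, v \in \SinkVC_R$. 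In (A), the $\SinkVC_L$-tightening yields $r_u \leq T'-1$ and the predecessor-type tightening on $v$ (triggered by $u \in \SinkVC_L \setminus \EarlyVC_L$) gives $l_v \geq T'$, hence $r_u < l_v$; case (B) combines both $v$-type and $u$-type tightenings directly; case (C) is symmetric to (A). The $C_M$ edge cases follow immediately from $a_v = b_v = T'$ for $v \in C_M$ entering the minimum defining $r_u$ (and the symmetric statement for $l_u$).

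The main obstacle is the same-block subcase in property~(3): without the sharpening of $b_u$ one would obtain $l_v = z(i(u))+1$ strictly exceeding $\text{time}_\sigma(v) = z(i(u))$. Overcoming this requires the finer type information from the fingerprint --- specifically, that a non-$U'$ predecessor of a $U'$-sink scheduled at a block's sink moment is forced to lie strictly inside the block rather than at the sink moment itself --- which is exactly what Properties~\ref{prop:Q1}--\ref{prop:Q3} provide.
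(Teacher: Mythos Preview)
Your approach is close in spirit to the paper's---both routes use the fingerprint together with Properties~\ref{prop:Q1}--\ref{prop:Q3} to pin down intervals---but your construction is a generic ``$l_v = 1 + \max_u b_u$'' formula with post-hoc sharpening, whereas the paper gives explicit case-by-case formulas (in particular, for $v \in \SinkVC_L \setminus \EarlyVC_L$ it sets $l_v = z'(\depth{\HighVC_L}{v}+2)$ directly, and for $v \in U$ it separates predecessors by type into four lower bounds $l_v^1,\dots,l_v^4$). Your version can be made to work, but the justification you give for the ``delicate subcase'' has a genuine gap.

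The problem is with predecessors $u \in I_L = \Pred(C_L)\setminus C_L$. Property~\ref{prop:Q3} allows such a $u$ to sit \emph{at} the sink moment $z(i)$ of its block, so the block-based bound $b_u = z(i)$ cannot be sharpened from $u$'s type alone. Your proposed fix---``$v$ is at the block's sink moment, so $u \prec v$ forces $u$ strictly inside''---is circular: you are trying to establish $l_v$, so you cannot assume you already know $v$'s block. What actually rescues the argument is different: since $u \notin C$ and $u \prec v$, the vertex-cover property forces $v \in C$, hence $v \in \SinkVC_L \setminus \EarlyVC_L$ (or $\SinkVC_R \setminus \EarlyVC_R$, which is easy). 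Now \emph{lateness} of $v$ does the work: any chain of $\HighVC_L$-predecessors of $u$ is also a chain of $\HighVC_L$-predecessors of $v$, so $\depth{\HighVC_L}{v} \geq \depth{\HighVC_L}{u} = i-1$; combined with $v \notin \EarlyVC_L$ (so $v$ is not at $z(\depth{\HighVC_L}{v}+1)$) and the fact that $v$, as a sink, can only occupy sink moments or lie past $z_L$, one gets $v$'s time $> z(i)$. This is precisely the reasoning the paper packages into its direct formula $l_v = z'(\depth{\HighVC_L}{v}+2)$ for Case~1, and it is the missing ingredient in your write-up.
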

\begin{proof}

Recall that $U'$ is the union of $U$, $(\SinkVC_L\setminus \EarlyVC_L)$, and
$(\SinkVC_R \setminus \EarlyVC_R)$. We will prove the claim for each of these
three sets separately. The cases $v \in (\SinkVC_L \setminus \EarlyVC_L )$ and $v \in
(\SinkVC_R\setminus \EarlyVC_R)$ are symmetric and we consider them first. 

As before, let $\ell$ be the number of sink moments in $\sigma_L$ and $z(i)$ the time of the $i$th sink moment of $\sigma_L$.  
Let $k$ be the number of sink
moments in $\dual{\sigma_R}$ and $y(i)$ the time of the $i$th sink moment of
$\dual{\sigma_R}$ in $\sigma$. Let $z_R \in [T'+1, T]$ be the first moment of
$\sigma$ where a non-source of $V(\sigma_R)$ is processed (see
Figure~\ref{fig:zlzr} for schematic definition of positions $z(i),y(i)$ and
$z_L$ and $z_R$).
Define $z'(i)$ as $z(i)$ for $i \in [\ell]$ and $z'(\ell+1) = z_L$ and similarly $y'(i)$ as $y(i)$ for $i \in [k]$ and $y'(k+1) = z_R$.

\begin{figure}[ht!]
    \centering
    \includegraphics[width=\textwidth]{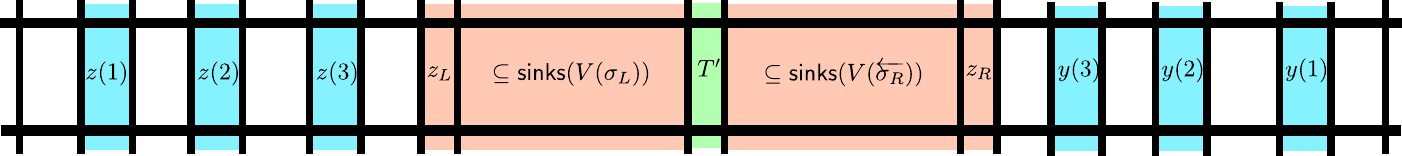}
    \caption{Definition of $z(i),y(i),z_L$ and $z_R$. Here $\ell$ and $k$ are
        equal $3$. Sink moments of $\sigma_L$
    and $\dual{\sigma_R}$ are highlighted blue. Green is highlighted the moment
    $T'$. In timeslots $[z_L,T'-1]$ and $[T'+1,z_R]$ only sinks of $\sigma_L$ and
    $\dual{\sigma_R}$ are scheduled.}
    \label{fig:zlzr}
\end{figure}

\subparagraph*{Case 1:}
Let $v \in (\SinkVC_L\setminus \EarlyVC_L)$, in other words, $v \in C$, $v \in \Sinks(\sigma_L)$ and $v$ is not early. For such a $v$ we take $l_v = z'(\depth{\HighVC_L}{v}+2)$ and $r_v = T'-1$.
It is easy to see that all successors of $v$ are processed after $r_v$: $v$ is a sink in $\sigma_L$, so it has no successors in $\sigma_L$ and (2) follows. 

Since $\sigma_L$ is sink-adjusted, we know that at any sink moment $t$ of $\sigma_L$ it holds that $S_{[t+1,T'-1]} \subseteq \Succ(S_{t}) \cup \Sinks(V(\sigma_L))$. Also, any chain can contain at most one vertex from $V\setminus C$ (Claim~\ref{claim:chain}). Hence after $z'(\depth{\HighVC_L}{v}+2)$ all predecessors of $v$ must be processed and (1) is indeed true.      

By definition of earliness, $v$ is not processed at the $(\depth{\HighVC_L}{v}+1)$th sink moment of $\sigma_L$. Additionally, $v$ cannot be processed at a sink moment before $z'(\depth{\HighVC_L}{v}+1)$, as at this sink moment its predecessors from $\HighVC_L$ are processed. Thus, since $v$ is processed in $\sigma_L$, (3) follows as well.

For $v \in (\SinkVC_R\setminus \EarlyVC_R)$ we define $l_v$ and $r_v$ in a
similar way, using the properties of $\sigma_R$.

\subparagraph*{Case 2:}
If $v \in U = V \setminus (C \cup I_L \cup I_R)$, the definition of $l_v$ and $r_v$ is a bit less straightforward. We do this by defining four possible lower bounds.  For notational simplicity, we let $\max \{\emptyset\} = 0$. 
\begin{align*}
    l^1_v &= \max\{ z'(i) : \exists u \in (C_L \setminus(\HighVC_L \cup \SinkVC_L))\cap \Pred(v) \text{ in $i$th block of $\sigma_L$}   \},\\ 
    l^2_v &= \max\{ z(i)+1 : \exists  u \in \HighVC_L\cap \Pred(v) \text{ in $i$th sink moment of $\sigma_L$} \},\\
    l^3_v &= \begin{cases} T' & \text{if } v \in \Succ(\SinkVC_L),  \\0 & \text {else,}
    \end{cases} \qquad
    l^4_v = \begin{cases} T'+1 & \text{if } v \in \Succ(C_M), \\0 & \text {else.}
    \end{cases}
\end{align*}
Similarly, for $r_v$ we define four upper bounds.  
\begin{align*}
    r^1_v &= \min\{ y'(i) : \exists u \in (C_R \setminus(\HighVC_R \cup \SinkVC_R))\cap \Succ(v) \text{ in $i$th block of $\dual{\sigma_R}$}   \},\\ 
    r^2_v &= \min\{ y(i) -1 : \exists  u \in \HighVC_R \cap \Succ(v)  \text{ in $i$th sink moment of $\dual{\sigma_R}$ \}},\\
    r^3_v &= \begin{cases} T' & \text{if } v \in \Pred(\SinkVC_R),  \\0 & \text {else,}
    \end{cases} \qquad
    r^4_v = \begin{cases} T'-1 & \text{if } v \in \Pred(C_M), \\0 & \text {else.}
    \end{cases}
\end{align*}
We then take $l_v = \max\{l^1_v, l^2_v, l^3_v, l^4_v\}$ and $r_v = \min\{r^1_v, r^2_v, r^3_v, r^4_v\}$.     
Note that the values of $l_v$ and $r_v$ can clearly be computed in polynomial
time, as they are simple expressions that only depend on $f$ and $G$. See
Figure~\ref{fig:inequalities} for schematic overview of lower and upper bounds.

\begin{figure}[ht!]
    \centering
    \includegraphics[width=\textwidth]{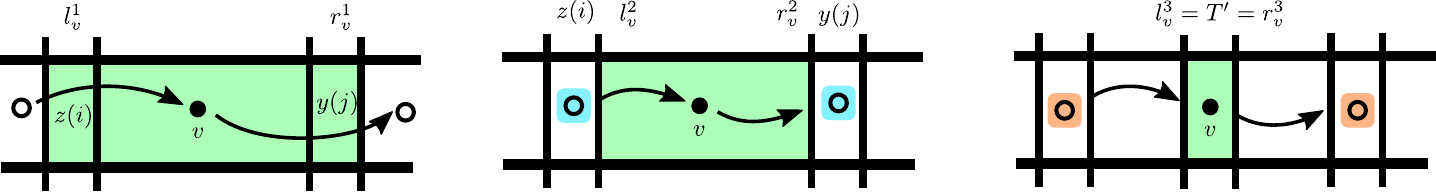}
    \caption{Schematic picture of determining lower bounds $l_v^1,l_v^2,l_v^3$
        and $r_v^1,r_v^2,r_v^3$. We highlighted green the available intervals
        (e.g., $[l_v^1,r_v^1]$) of job $v$. The first schema determines $l_v^1$ and
        $r_v^1$. For example, if vertex $u$ is in block $B_i$ then $l_v^1 \ge z(i)$.
        Middle schema says that if $v$ has predecessor from $\HighVC$ in sink-moment
        $z(i)$ then $l_v^2 > z(i)$. Last condition simply says that if a sink in
        $V(\sigma_L)$ is predecessor of $v$ then it needs to be processed at $T'$ or
        later. Inequalities for $l_v^4$ and $r_v^4$ are similar to the last
        figure.}
    \label{fig:inequalities}
\end{figure}

First we prove (1), the proof of (2) is similar. Let $u \in \Pred(v)\setminus U'$, as a
consequence $v \in \Succ(u)$. Because $v \not \in C$ we know $u \in C$.  The
vertex $u$ cannot be in $C_R$, as then we would have $v \in \Succ(C_R)$, i.e. $v
\in I_R$ and thus $v \not \in U'$.  If $u \in C_M$, then $u$ is processed at
$T'$ and before $l^4_v$.  If $u \in C_L \setminus(\HighVC_L\cup \SinkVC_L)$, $u$
cannot be processed at a sink moment of $\sigma_L$. If $u$ is processed at some
$i$th block of $\sigma_L$ for $i < \ell$, it is therefore always processed
before the $i$th sink moment because of bound $l_v^1$. If $u$ is processed at
the $(\ell+1)$th block of $\sigma_L$, then it is definitely processed before
$z'(\ell+1) = z_L$ as it is not a sink in $\sigma_L$. Therefore, it is also
processed before $l_v^1$.   If $u \in \HighVC_L$, then $u$ is processed at a
sink moment and before $l^2_v$.  If $u \in \SinkVC_L$, then $u$ is processed in
$\sigma_L$ and therefore before $l^3_v$.

For (3); we have to prove that $v$ is scheduled in interval $[l_v,r_v]$ in
$\sigma$. We show that $u$ is processed at or after $l_v$.  To this end, it is
sufficient to show that $u$ is processed after all lower bounds $l^1_v$,
$l^2_v$, $l^3_v$ and $l^4_v$ separately.  For $l^1_v$; if there is some $u \in
(C_L \setminus(\HighVC_L \cup \SinkVC_L))\cap \Pred(v)$ at the $i$th block of
$\sigma_L$, then it is processed somewhere strictly before $z'(i)$ as it is not a sink of $\sigma_L$. Because $v$ must be processed at a sink moment or after $z_L$,
it is processed at or after $z'(i)$ in $\sigma$.  For $l^2_v$; if there is some $u \in
\HighVC_L \cap \Pred(v)$ at the $i$th sink moment, $v$ is processed
after at some sink moment after $z(i)$ or after $z_L$.  If $l^3_v = T'$,  then  there is some $u \in \SinkVC_L$
such that $u \prec v$. Because $u$ is by definition a sink in $\sigma_L$, $v$
cannot be processed in $\sigma_L$. Therefore, $v$ is processed at or after $T'$.
If $l^4_v = T'+1$, then there is some $u \in C_M$ such that $u \prec v$.
Clearly, $v$ has to be processed at or after $T'+1$.  Hence $v$ is processed
after of at $l_v$. The proof that $u$ is processed before or at $r_v$ is
similar. This concludes the proof of Items (1-3).

It remains to show that condition $r_u < l_v$ holds if $u \prec v$ for every
$u,v \in U'$. Let $u,v \in U'$ and $u \prec v$. At least one of $u$ or $v$ is in
$C$. Recall that any job from $U'$ in $\sigma_L$ is a sink in $\sigma_L$ and any
job in $U'$ is $\sigma_R$ is a sink in $\dual{\sigma_R}$.  Therefore, when $u$
and $v$ are both in $C$, then $u\in \SinkVC_L$ and $v \in \SinkVC_R$ and by
definition $l_u < r_v$.  Now, let us assume that  $u \in C$ and $v \not\in C$
(the proof is analogous when $u \notin C$ and $v \in C$). Then $u$ cannot be in
$\SinkVC_R$ as $u \in \SinkVC_R$ and $u \prec v$ would imply $v \in I_R$ and
thus $v \not \in U'$. So, $u \in \SinkVC_L$ and $r_u = T'-1$. Because $u \in
\SinkVC_L$ and $u \in U'$, by definition then $l_v \ge l^3_v = T' > r_u$. 

Note that if $u \in U'$, $v \in C_M$ and $u\prec v$ then $r_u < T'$ and
similarly if $u \in U'$, $v \in C_M$ and $v\prec u$ then $T' < l_u$, because of
the lower and upper bounds $l^4_v$ and $r^4_v$.
\end{proof}

Given these $l_v$ and $r_v$ for each $v \in U'$, we add an edge $(v,(t,j))$ to $F$ if and only if $(t,j) \in P$ and $l_v \le t \le r_v$. 

\subparagraph*{The algorithm $\mathtt{divide}$.}

We will now finish the proof of Lemma~\ref{claim:reconstruction} by giving the
algorithm $\mathtt{divide}$ in Algorithm~\ref{algorithm2} and proving that it
has all properties of Lemma~\ref{claim:reconstruction}.

\begin{algorithm}[h!] 
\nonl\noindent\textbf{Algorithm} $\mathtt{divide}(G,m,T',C,f)$\\
\SetAlgoLined
\DontPrintSemicolon
$I_L \gets \Pred(C_L)\setminus C_L$, $I_R \gets \Succ(C_R)\setminus C_R$, $U \gets (V\setminus (C \cup I_L \cup I_R)$.\\
$U' \gets  U \cup (\SinkVC_L\setminus \EarlyVC_L) \cup (\SinkVC_R \setminus \EarlyVC_R) $\\
Compute $P$ and $F$ \tcp*{as discussed in Section~\ref{subsec:divide}}
Construct bipartite graph $H = ((P,U'),F)$\\
$\mathcal{M} \gets \mathtt{MaximumMatching}(H)$\\
\If{$\mathcal{M}$ is a perfect matching}{ 
$X_L \coloneqq I_L \cup \{v \in U: \{v,(t,j)\}\in \mathcal{M}, t \in [1,T'-1] \}$ \\
$X_M \coloneqq \{v \in U: \{v,(t,j)\}\in \mathcal{M}, t = T' \}$ \\
$X_R \coloneqq I_R \cup \{v \in U: \{v,(t,j)\}\in \mathcal{M}, t \in [T'+1,T] \}$ \\
\Return $(X_L,X_M,X_R)$}
\Return False
\caption{Algorithm for Lemma~\ref{claim:reconstruction}.}
\label{algorithm2}
\end{algorithm}
    
Clearly, $\mathtt{divide}$ runs in polynomial time as it construct graph $H$
using Claims~\ref{claim:blocks}~and~\ref{claim:intervals} (which both take
polynomial time) and then computes a perfect matching of $H$.  We are left to
show that if
$f=(C_L,C_R,\SinkVC_L,\SinkVC_R,\HighVC_L,\HighVC_R,\EarlyVC_L,\EarlyVC_R)$ is
the fingerprint of a tight $m$-schedule $\sigma$ of $G$ that is middle-adjusted
at time $T'$, then the partition $X_L,X_M,X_R$ of $V \setminus C$  returned by
$\mathtt{divide}$ has the following properties: $G[C_L \cup X_L]$ and $G[C_R
\cup X_R]$ have tight $m$-schedules, $|X_M \cup C_M| = m$, $\Pred(C \setminus C_R) \subseteq C_L \cup
X_L$, and $\Succ(C \setminus C_L) \subseteq C_R \cup X_R$.

First, we prove that $\mathtt{divide}$ returns a partition at all. In other
words, we show that the bipartite graph $H$ has a perfect matching. We claim
there is a perfect matching of $H$ based on $\sigma$. By matching vertices to
any position at the time slot they are processed in $\sigma$, we get a perfect
matching. These edges must exist in $H$ because of (3) in
Claim~\ref{claim:intervals}. 

Because by construction there are $m -|C_M|$ position in $P$ with $t = T'$ and $\mathcal{M}$ is a perfect matching, $|X_M \cup C_M| = m$.

Next, we prove $G[C_L \cup X_L]$ has a tight $m$-schedule. Take $\sigma_L$ and
remove any jobs from $U'$. This leaves exactly the positions in the set $P$ to
be empty by Claim~\ref{claim:blocks}. Then construct the schedule $\sigma_L'$ by
processing each job $v \in U'$ at the timeslot a job $v$ is matched to in the
matching $\mathcal{M}$.  More precisely, let $v \in U'$ be matched to some
position $(t,j)$ for $t<T'$ by $\mathcal{M}$, then process $v$ at time $t$ in
$\sigma_L'$.  Because of properties (1-2) of Claim~\ref{claim:intervals}, we
know that all jobs in $\Pred(v)\setminus U'$ are scheduled before $l_v \le t$
and all jobs in $\Succ(v) \setminus U'$ are processed after $r_v \ge t$.
Furthermore, if there is some $u \in U'$ that is comparable to $v$, then we know
that their intervals imply the precedence constraints. Finally, since
$\mathcal{M}$ is a perfect matching, all positions are filled. Hence, we have a
tight $m$-schedule.  With similar arguments $G[C_R \cup X_R]$ has a tight
$m$-schedule.

It remains to show $\Pred(C \setminus C_R) \subseteq C_L \cup X_L$. Take $v \in
\Pred(C \setminus C_R) = \Pred(C_L \cup C_M)$. If $v \in \Pred(C_L)$ then $v \in
C_L \cup I_L \subseteq C_L \cup X_L$. If $v \in \Pred(C_M)$, then by
Claim~\ref{claim:intervals} we have $r_v < T'$ and so $v \in C_L\cup X_L$.  Similarly we
can show that $\Succ(C \setminus C_L) \subseteq C_R \cup X_R$. 

This concludes the proof of Lemma~\ref{claim:reconstruction}.

\section{Getting below $2^n$: Proof of Theorem~\ref{thm:mainthm}}
\label{sec:exact} 
In this section we give the present the two exact algorithms 
needed to prove our main result,  Theorem~\ref{thm:mainthm}. 
We first give an $\Os (2^n)$ time algorithm using Fast Subset Convolution for \sched in 
Subsection~\ref{subsec:subsetconvolution}. We then improve this result and give an $\Os(\AC+2^{n-m})$ 
algorithm in Subsection~\ref{subsec:fastersubset}. In Subsection~\ref{subsec:DP} we present a natural Dynamic 
Programming algorithm that runs in $\Os(\AC\binom{n}{m})$. In Subsection~\ref{subsec:proofmain} we prove that these algorithms together with Theorem~\ref{thm:VC} can be combined into an algorithm solving \sched in $\Oh(1.995^n)$ time.

\subsection{An $\Os(2^n)$ algorithm using Fast Subset Convolution} \label{subsec:subsetconvolution}

In this subsection, we show how to use Fast Subset Convolution to solve \sched
in $\Os(2^n)$ time. 

\begin{theorem}
    \label{thm:baseline}
    \sched can be solved in time $\Os(2^n)$. 
\end{theorem}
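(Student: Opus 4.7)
The plan is to set up the standard dynamic program over downward-closed subsets indexed by the number of time slots used, and to accelerate the per-subset ``guess the last slot'' step via Fast Subset Convolution. For a downward-closed set $S \subseteq V$ (i.e., $\Pred(S)\subseteq S$) and an integer $k\ge 0$, define $f_k(S)\in\{0,1\}$ to be $1$ iff $S$ admits an $m$-schedule of makespan at most $k$; extend $f_k$ to all subsets by $f_k(S)=0$ whenever $S$ is not downward-closed. With $f_0(\emptyset)=1$, the recurrence obtained by guessing the jobs in the last time slot is
\[
    f_k(S)\;=\;\bigvee_{T\subseteq \Sinks(S),\ |T|\le m} f_{k-1}(S\setminus T).
\]
The answer to \sched is then the smallest $k$ with $f_k(V)=1$, which is at most $n$.

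I would evaluate this recurrence using a subset convolution in the ring $\mathbb{Z}$ with the helper function $g(T) \coloneqq \iverson{|T|\le m \text{ and } T \text{ is an antichain}}$. Concretely, for each $k=1,\dots,n$ I compute $f_k' \coloneqq f_{k-1} * g$ in $\Os(2^n)$ time using the Fast Subset Convolution algorithm of Bj\"orklund--Husfeldt--Kaski--Koivisto, then obtain $f_k$ by thresholding $f_k'$ to $\{0,1\}$ and zeroing it out on all non-downward-closed subsets (a trivial $O(n \cdot 2^n)$ cleanup that guarantees the invariant that $f_k(S)=0$ on non-admissible prefixes is carried forward into the next iteration). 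After at most $n$ iterations I output the smallest $k$ with $f_k(V)=1$.

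The correctness rests on the following observation, which is the heart of the argument: for any downward-closed $S$, the set
\[
    \{\,T\subseteq V : g(T)=1 \text{ and } S\setminus T \text{ is downward-closed}\,\}
\]
is exactly $\{T\subseteq \Sinks(S): |T|\le m\}$. Indeed, if $S$ is downward-closed and $T\subseteq S$, then $S\setminus T$ is downward-closed iff $T$ is upward-closed in $S$, which combined with $T$ being an antichain forces $T\subseteq \Sinks(S)$; conversely, $\Sinks(S)$ is always an antichain, so every $T\subseteq \Sinks(S)$ with $|T|\le m$ satisfies $g(T)=1$. Hence for downward-closed $S$,
\[
    (f_{k-1} * g)(S)\;=\;\sum_{T\subseteq \Sinks(S),\, |T|\le m} f_{k-1}(S\setminus T),
\]
which is at least $1$ precisely when a valid last slot exists, matching the Boolean recurrence. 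Working in $\mathbb{Z}$ rather than in the Boolean semiring is unproblematic because each count is bounded by $2^n$ and thus has polynomially many bits.

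The only genuine obstacle is getting this equivalence right: one has to argue carefully that on downward-closed $S$ the ``antichain and size at most $m$'' condition encoded by $g$, combined with the disjoint decomposition $A \sqcup T = S$ coming from the convolution, is equivalent to $T \subseteq \Sinks(S)$, so that the convolution neither over- nor under-counts valid last slots. Once this is in place, the per-iteration cost $\Os(2^n)$ together with $O(n)$ iterations yields the claimed $\Os(2^n)$ running time.
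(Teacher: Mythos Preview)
Your proposal is correct and essentially identical to the paper's proof: both define the same indicator $g$ for antichains of size at most $m$, compute $f_{k-1}\circledast g$ via fast subset convolution, threshold the result, and restrict to downward-closed sets, with your key observation being precisely the content of the paper's Lemma~\ref{lem:Zetacorr}. The only cosmetic difference is notation ($*$ versus $\circledast$, ``downward-closed'' versus ``$X=\Pred[X]$'').
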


This is a base-line of our methods.  Later, we will then use this algorithm to
get a faster than $\Os(2^n)$ algorithm in the case $m \ge n/236$ in
Subsection~\ref{subsec:fastersubset}. To prove~\cref{thm:baseline} let us first recall what we can do with
fast subset convolutions.

\begin{theorem}[Fast subset convolution with Zeta/ M\"obius transform~\cite{BjorklundHKK09}]
    \label{Thm:zetamob}
    Given functions $f,g: 2^U \rightarrow \mathbb{N}$. There is an algorithm that
    computes 
    \begin{displaymath}
        (f \circledast g)(S) \coloneqq \sum_{T \subseteq S} f(T) \cdot g(S \setminus T)
    \end{displaymath} 
    for every $S \subseteq U$ in 
    $2^{|U|}\cdot |U|^{\Oh(1)}$
    ring operations.
\end{theorem}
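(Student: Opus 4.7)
The plan is to express makespan feasibility as a repeated subset convolution and invoke Theorem~\ref{Thm:zetamob} in $\Oh(n)$ successive rounds. Call $S\subseteq V$ \emph{downward closed} if $\Pred[S]=S$.

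Introduce two $\{0,1\}$-valued functions $\mathsf{down},\mathsf{valid}:2^V\to\mathbb{N}$ by $\mathsf{down}(S)=\iverson{\Pred[S]=S}$ and $\mathsf{valid}(A)=\iverson{|A|\le m \text{ and } A \text{ is an antichain of } G}$; both are computable in polynomial time per subset. Let $\tilde{h}_t(S)$ be the number of schedules of length at most $t$ that process exactly the jobs in $S$ (allowing empty time slots as padding). Set $\tilde{h}_0(\emptyset)=1$ and $\tilde{h}_0(S)=0$ otherwise, and establish the recursion
\[
\tilde{h}_t(S) \;=\; \mathsf{down}(S)\cdot\sum_{A\subseteq S}\mathsf{valid}(A)\cdot \tilde{h}_{t-1}(S\setminus A) \qquad (t\ge 1).
\]

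The crux of correctness is the following equivalence, which replaces the usual ``last time slot consists of maxima of $S$'' condition by a statement expressible purely in terms of downward closure: if $S$ is downward closed and $A$ is an antichain with $|A|\le m$, then $A$ can serve as the last time slot of a valid schedule of $S$ if and only if $S\setminus A$ is also downward closed. Indeed, if some $a\in A$ had a successor $b\in S$, then $b\notin A$ (as $A$ is an antichain), hence $b\in S\setminus A$, and downward closure of $S\setminus A$ would force $a\in S\setminus A$, a contradiction; so $A$ consists of maxima of $S$. Combined with the outer factor $\mathsf{down}(S)$ and the fact that $\tilde{h}_{t-1}$ vanishes on non-downward-closed sets, this ensures that exactly the ``valid-last-slot'' $A$'s contribute to the sum.

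By Theorem~\ref{Thm:zetamob}, each update $\tilde{h}_t=\mathsf{down}\cdot(\mathsf{valid}\circledast\tilde{h}_{t-1})$ uses $2^n\cdot n^{\Oh(1)}$ ring operations, plus a pointwise multiplication by $\mathsf{down}$. The minimum makespan is the smallest $T$ with $\tilde{h}_T(V)>0$; since the optimum is at most $n$, iterating $t=1,\ldots,n$ yields total time $\Os(n\cdot 2^n)=\Os(2^n)$. Intermediate counts are bounded by $n!$, hence each ring operation costs $\mathrm{poly}(n)$ bit operations and is absorbed by $\Os(\cdot)$. An actual optimal schedule is reconstructed by standard backtracking in polynomial time per slot. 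The main conceptual obstacle is the equivalence above, which allows a plain subset convolution to implement the transition without an explicit maximality test; once that is in place, the algorithm is immediate.
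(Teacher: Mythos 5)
There is a genuine gap: you have not proved the statement at all. The statement to be established is the fast subset convolution theorem itself --- that for arbitrary $f,g\colon 2^U\to\mathbb{N}$ all $2^{|U|}$ values of $(f\circledast g)(S)=\sum_{T\subseteq S}f(T)g(S\setminus T)$ can be computed in $2^{|U|}\cdot|U|^{\Oh(1)}$ ring operations, rather than the $3^{|U|}$ operations of the naive evaluation. Your write-up instead \emph{invokes} this theorem as a black box ("By Theorem~\ref{Thm:zetamob}, each update \dots uses $2^n\cdot n^{\Oh(1)}$ ring operations") and uses it to derive the $\Os(2^n)$ scheduling algorithm; that is the content of Theorem~\ref{thm:baseline} in the paper (and your downward-closure equivalence is essentially the paper's Lemma~\ref{lem:Zetacorr}), not of the statement you were asked to prove. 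As it stands the argument is circular with respect to its target: nothing in it explains how a single convolution is carried out faster than $3^{|U|}$.

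A proof of the actual statement needs the ranked zeta/M\"obius machinery of Bj\"orklund, Husfeldt, Kaski and Koivisto: split $f$ and $g$ into ranked slices $f_i(S)=f(S)\cdot[\,|S|=i\,]$, compute the zeta transforms $\hat f_i(S)=\sum_{T\subseteq S}f_i(T)$ of all slices via the standard element-by-element dynamic programming (Yates' algorithm) in $|U|\cdot 2^{|U|}$ operations per slice, form the pointwise ranked products $\hat h_k(S)=\sum_{i+j=k}\hat f_i(S)\hat g_j(S)$, apply M\"obius inversion to each $\hat h_k$, and finally read off $(f\circledast g)(S)=h_{|S|}(S)$; correctness rests on the observation that after zeta-transforming, the rank bookkeeping filters out exactly the pairs $(T,S\setminus T)$ with $|T|+|S\setminus T|=|S|$, i.e.\ the disjoint ones. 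None of these ideas appears in your proposal, so the key step is missing entirely.
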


We will use this convolution multiple times in our algorithm.
The plan is to encode the set of jobs $V$ as the universe $U$. Then
the $f$ function will encode whether it is possible to process the jobs of $X \subseteq V$
within a given time frame. Function $g$ will be used to check whether the set of jobs $Y
\subseteq V$can be processed at the last time-slot. We define these function formally.

For any $X \subseteq V$ and $t \in [T]$ let
\begin{displaymath}
    f_{t}(X) \coloneqq \begin{cases} 1 & \text{if jobs $\Pred[X]$ can be
    processed within first $t$ time slots, and $X = \Pred[X]$}, \\ 0 & \text{otherwise.} \end{cases}
\end{displaymath}

For any $Y \subseteq V$ define 
\begin{displaymath}
    g(Y) \coloneqq \begin{cases} 
        1 & \text{ if } |Y| \le m \text{ and } Y \text{ is an antichain }, \\
        0 & \text{ otherwise,}
    \end{cases}
\end{displaymath}

Note that the value $f_{T}(V)$ tells us whether the set of jobs can be processed
within $T$ time units and is therefore the solution to our problem.
Additionally, observe that the base-case $f_{0}(X)$ can be efficiently determined
for all $X \subseteq V$ because $f_{0}(X)=1$ if $X = \emptyset$ and $f_{0}(X)=0$
otherwise. Moreover, for a fixed $Y \subseteq V$, the value of $g(Y)$ can be
found in polynomial time. 

It remains to compute $f_{t}(X)$ for every $t>0$. To achieve this, we define an
auxiliary function $h_{t} : 2^V \rightarrow \nat$. For every $Z \subseteq V$, let
\begin{displaymath}
    h_t(Z) \coloneqq \sum_{X \subseteq Z} f_{t-1}(X) \cdot g(Z \setminus X)
    .
\end{displaymath}
Once all values of $f_{t-1}(X)$ are known, the values of $h_{t}(Z)$ for $Z
\subseteq V$ can be computed in time $\Oh(2^{n})$ time using
Theorem~\ref{Thm:zetamob}. Next, for every $X \subseteq V$ we determine the
value of $f_t(X)$ from $h_t(X)$ as follows:
\begin{displaymath}
    f_{t}(X) = \iverson{h_{t-1}(X)\ge 1} \cdot \iverson{X = \Pred[X]}. 
\end{displaymath}

For every $X \subseteq V$ this transformation can be done in polynomial time.
Therefore, the total runtime of computing $f_t$ is 
$2^n\cdot n^{\Oh(1)}$
To prove
correctness of our algorithm and finish a proof of~\cref{thm:baseline}, it
suffices to prove the following lemma:

\begin{lemma}[Correctness]\label{lem:Zetacorr}
    Let $X,Z \subseteq V$ be such that $Z = \Pred[Z]$ and let $Y \coloneqq X
    \setminus Z$. Then, the following statements are equivalent:
    \begin{itemize}
        \item $X = \Pred[X]$ and $Y$ is an antichain.
        \item $Y \subseteq \Sinks(Z)$.
    \end{itemize}
\end{lemma}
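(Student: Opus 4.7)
The plan is to prove the equivalence in both directions, with the closure hypothesis $Z=\Pred[Z]$ (and, in one direction, $X=\Pred[X]$) serving as the bridge between the condition ``$Y$ is an antichain'' and the condition ``$Y \subseteq \Sinks(Z)$''. Throughout I interpret $Y$ as the complement of $X$ inside the downward-closed ambient set $Z$, so that $X$ and $Y$ partition $Z$; this is the content-bearing case of the lemma and matches the role of $Y$ as ``the jobs scheduled in the last time slot'' in the convolution.

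For the forward direction, assume $X=\Pred[X]$ and that $Y$ is an antichain. Pick any $y \in Y$ and, toward a contradiction, suppose $y$ has a successor $w \in Z$. Since $X \cup Y = Z$, either $w \in X$ or $w \in Y$. In the first case, $y \prec w$ together with $w \in X=\Pred[X]$ forces $y \in X$, contradicting $y \in Y$. In the second case, $y,w \in Y$ with $y \prec w$ contradicts $Y$ being an antichain. Hence $y$ has no successor in $Z$, so $y \in \Sinks(Z)$, and $Y \subseteq \Sinks(Z)$ follows.

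For the backward direction, assume $Y \subseteq \Sinks(Z)$. That $Y$ is an antichain is immediate: a comparable pair $y \prec y'$ in $Y$ would witness a successor of $y$ inside $Z$, contradicting $y \in \Sinks(Z)$. For $X = \Pred[X]$, take any $x \in X$ and any $p \prec x$; closure $Z = \Pred[Z]$ together with $x \in Z$ gives $p \in Z$, so $p \in X$ or $p \in Y$. If $p$ were in $Y$, then $x \in Z$ would be a successor of $p$ in $Z$, contradicting $p \in \Sinks(Z)$. Thus $p \in X$, which yields $X = \Pred[X]$.

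The conceptual point I expect to be the main delicacy is that ``$Y$ is an antichain'' is in isolation strictly weaker than ``$Y \subseteq \Sinks(Z)$'': the sink condition forbids successor-relations throughout all of $Z$, not only within $Y$. The hypothesis $X=\Pred[X]$ is exactly what closes that gap by ruling out the remaining case of a $Y$-element having a successor inside $X$. This dovetailing of two $\Pred[\cdot]$-closures, one on the ``prefix'' $X$ and one on the ambient set $Z$, is the structural heart of the argument; apart from this observation I expect the case analysis above to be routine.
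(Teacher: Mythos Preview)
Your proof is correct and mirrors the paper's argument essentially line for line: both directions use the partition $Z = X \cup Y$ together with the downward-closure of $X$ and $Z$ to rule out the two possible locations of a hypothetical successor or predecessor. Your explicit reinterpretation of $Y$ as $Z \setminus X$ (correcting what is evidently a typo in the stated $Y \coloneqq X \setminus Z$) is exactly the reading the paper's own proof relies on.
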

\begin{proof}
    \textbf{($\Uparrow$)}:
    Assume that $Y \subseteq \Sinks(Z)$. Then automatically $Y$ is an antichain. It
    remains to check that for all $v \in X$ it holds that $\Pred(v) \subseteq X$. Because $\Pred(v) \subseteq \Pred[Z] = Z$ and $Y$ contains only sinks. This
    means that $X = \Pred[X]$.

    \textbf{($\Downarrow$)}: Assume that $Y$ is an antichain and $X = \Pred(X)$.
    Take any $v \in Y$ and assume $v \not \in \Sinks(Z)$. However then there is
    a successor $v' \in Z$ of $v$, i.e., $v \prec v'$. However $Y$ is an
    antichain and $v' \not \in Y$.  Hence it must be that $v'\in X$. But then
    $\Pred(v') \not \subseteq X$, which contradicts the that $\Pred[X] = X$.
\end{proof}

This concludes the proof of~\cref{thm:baseline}. Note, that the above algorithm
computes all the values of dynamic programming.

\begin{remark}
    \label{rem:stronger-baseline}
    Given an instance of \sched, we can compute in $\Os(2^n)$ time the value of $f_t(X)$ for every $t \in [T]$ and
    $X \subseteq V$.
\end{remark}

\subsection{An $\Os(2^{n-m} + \AC)$ algorithm for Theorem~\ref{thm:baseline2}} \label{subsec:fastersubset}

Now, we will use Theorem~\ref{thm:baseline} as a subroutine and show that \sched can be
solved in $\Os(2^{n-m} + \AC)$ time.

\begin{theorem}
    \label{thm:baseline2}
    \sched can be solved in time $\Os(2^{n - m} + \AC)$. 
\end{theorem}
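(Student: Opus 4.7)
The plan is to decompose the computation into two parts whose costs add rather than multiply: one part will enumerate the $\AC$ downward-closed sets (contributing the $\AC$ term) and one part will invoke fast subset convolution over a universe of size at most $n-m$ (contributing the $2^{n-m}$ term).

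As in Section~\ref{sec:VC}, I would first pad the instance with dummy unrelated sink jobs so that $n=Tm$, ensuring we are looking for a tight $m$-schedule. Under this normalization the last time slot $S_T$ is an antichain of size exactly $m$ inside $\Sinks(V)$. Since $\Sinks(V)$ is itself an antichain (two sinks cannot be comparable), every size-$m$ subset of $\Sinks(V)$ is a candidate for $S_T$, and checking feasibility reduces to deciding, for some such $Y$, whether $V\setminus Y$ admits a tight $m$-schedule of length $T-1$.

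For the $\AC$-part, I would use the antichain-indexed DP underlying Theorem~\ref{thm:DP}: by Lemma~\ref{lem:Zetacorr}, the values $f_t(X)$ can only be nonzero when $X=\Pred[X]$, and such $X$ is in bijection with the antichain $\Sinks(X)$. I would compute $f_{T-1}$ at all $\AC$ downward-closed sets in a bottom-up fashion, amortizing transitions over extension pairs $(X',X)$ with $X\setminus X'\subseteq \Sinks(X)$.

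For the $2^{n-m}$-part, the search for a valid $Y$ of size $m$ with $f_{T-1}(V\setminus Y)=1$ is a single subset-sum style computation over the universe $\Sinks(V)$. Set $N:=V\setminus\Sinks(V)$ (the non-sinks), note that $N$ is downward closed and that every downward-closed $X\supseteq N$ has the form $N\cup Z$ for some $Z\subseteq\Sinks(V)$; define $\phi(Z):=f_{T-1}(N\cup Z)$ and $\psi(Y):=\iverson{Y\text{ antichain},|Y|\le m}$, and observe that $Y$ is a feasible last slot iff $(\phi\circledast\psi)(\Sinks(V))\neq 0$. By Theorem~\ref{Thm:zetamob}, this convolution takes $\Os(2^{|\Sinks(V)|})$ time. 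Since in a tight schedule $S_1$ consists of $m$ sources, the dual graph has at least $m$ sinks, and applying Claim~\ref{claim:sym} I can always arrange the convolution to be on whichever of $\Sinks(V)$ or $\Sinks(\dual G)$ has size at most $n-m$.

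I expect the main obstacle to be making the $\AC$-part actually run in $\Os(\AC)$ rather than in $\Os(\AC\cdot\binom{n}{m})$, i.e., avoiding the per-state enumeration over candidate antichains $Y\subseteq\Sinks(X)$. The idea is to push this enumeration into the single convolution of the second part and to populate $f_{T-1}$ over downward-closed sets by a forward sweep that, for each pair of consecutive states, performs only polynomial work. The other subtlety is the choice of universe; the dual-graph symmetry used above is what guarantees the exponent is $n-m$ and not $|\Sinks(V)|$.
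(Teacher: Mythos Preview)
Your proposal has a genuine gap in the $\AC$-part. You want to compute $f_{T-1}(X)$ for \emph{all} downward-closed $X$ in $\Os(\AC)$ total time, but the forward sweep you sketch does not achieve this: passing from layer $t-1$ to layer $t$ at a state $X$ requires ranging over all $Y\subseteq\Sinks(X)$ with $|Y|\le m$, and amortizing over ``extension pairs $(X',X)$'' is exactly the $\Os(\AC\cdot\binom{n}{m})$ bound of Theorem~\ref{thm:DP}. Your suggested fix, ``push this enumeration into the single convolution of the second part'', cannot work as stated: the convolution over $\Sinks(V)$ handles only the \emph{last} time slot, whereas the expensive enumeration you want to eliminate occurs at \emph{every} one of the first $T-1$ slots. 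There is also a secondary issue: after padding with isolated dummy jobs (which are simultaneously sinks and sources), neither $|\Sinks(V)|\le n-m$ nor $|\Sources(V)|\le n-m$ is guaranteed, so your duality argument for the exponent $n-m$ does not go through.

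The paper's route avoids both problems by never computing $f_t$ over all downward-closed sets. After reduction rules (remove isolated vertices; peel off sources/sinks when there are at most $m$ of them) one may assume $|\Sinks(V)|\ge m$ and $|\Sources(V)|\ge m$. Then the full subset-convolution DP of Theorem~\ref{thm:baseline} is run on $G[V\setminus\Sinks(V)]$ and on the reversal of $G[V\setminus\Sources(V)]$, each of size at most $n-m$; this is where the $2^{n-m}$ term comes from. The $\AC$ term arises not from a DP over antichains but from \emph{guessing a single antichain}: in an optimal sink-adjusted schedule, let $z$ be the first time a sink appears and guess $S_z$. The sink-adjusted property forces $S_{[z+1,T]}=(\Succ(S_z)\cup\Sinks(V))\setminus S_z$, hence $S_{[1,z-1]}=V\setminus S_{[z,T]}$; the former contains no sources and the latter no sinks, so their feasibility is read off from the two precomputed tables in polynomial time. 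Thus the $\AC$ enumeration costs only $\Os(\AC)$, additively.
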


As usual, we use $T$ to denote the makespan.  First, we assume that $n \le m T$
because otherwise the answer is trivial \emph{no}. Our algorithm uses the
following reduction rules exhaustively.

\begin{reduction} 
    \label{red:isolated}
    Remove every isolated vertex from the graph.
\end{reduction}

\begin{reduction} 
    \label{red:sources}
    If there are $\le m$ sources (or $\le m$ sinks), we remove these sources
    (or sinks) from the graph and decrease $T$ by $1$. 
\end{reduction}

For the correctness of~\cref{red:isolated} assume that a schedule after
application of~\cref{red:isolated} has $n'$ jobs and makespan $T$.  It means
that the schedule has $mT - n'$ available slots. We can schedule the deleted
jobs at any these slots because these jobs do not have any predecessor and
successor constraints. The makespan of the schedule remains $T$, because we
assumed that the initial number of jobs is $\le mT$.

For correctness of~\cref{red:sources} observe that if a dependency graph has
$\le m$ sources then there exists an optimal schedule that processes these
sources at the first time slot. By symmetry if the dependency graph has $\le m$
sinks then in some optimal schedule these sinks are processed at the last
timeslot.  Moreover only sources can be processed at the first timeslot and only
sinks can be processed at the last timeslot.

Therefore, we may assume that there are at least $m$ sources and at least $m$
sinks in the dependency graph and there are no isolated vertices in the
dependency graph. Now, let us use Theorem~\ref{thm:baseline} as a subroutine.

Let $\sigma$ be an optimal sink-adjusted schedule and let $z$ be the first
moment a sink is processed in $\sigma$. 
By definition of sink-adjusted schedule either $z$ is a sink moment, or
$S_{[z,T]} \subseteq \Sinks(V)$. We may assume that no sources are
processed after $z$; otherwise we could switch the sink at time $z$ with such a
source (observe that by~\cref{red:isolated} we know that no job can be source
and sink at the same time).


Now we use~\cref{rem:stronger-baseline} and compute the values of $f_t(X)$ for all
$X\subseteq(V \setminus \Sinks(V))$ and $t \in [T]$ on graph $G[ V \setminus
\Sinks(V)]$.  Observe that graph $G \setminus \Sinks(V)$ contains at most $n-m$
jobs.  Therefore, computing all these values takes $\Os(2^{n-m})$ time.  Next, we
take graph $G[V \setminus \Sources(V)]$. We reverse all its arcs and
use~\cref{thm:baseline2} to compute the values $\dual{f}_t(X)$ for all
$X\subseteq (V \setminus \Sources(V))$ and $t \in [T]$ in time $\Os(2^{n-m})$.

After this preprocessing, we guess set $S_z \subseteq V$. Observe that the jobs
in $S_z$ form an antichain. Moreover we can enumerate all the anti-chains of
$G$ in $\Ot(\AC)$ time with the following folklore algorithm: start with a
minimal anti-chain.  Then guess the next vertex that you want to add to to your
current anti-chain and remove all the elements that are comparable to the
guessed vertex. Finally add the current anti-chain to your list and branch on
the next element.
In total, in order to guess $S_z$ and to compute functions $f_t$ and
$\dual{f_t}$ we need $\Os(2^{n-m} + \AC)$ time. It remains to argue that with $S_z$, $f_t$ and
$\dual{f_t}$ in hand we can solve \sched in polynomial time.
First, recall that $z$ is either the first sink moment or
$S_{[z,T]} \subseteq \Sinks(V)$. 
This means that we can identify set $S_{[z+1,T]} \coloneqq (\Succ({S_z}) \cup
\Sinks(V))\setminus S_z $ of jobs processed after $z$. Similarly, we can deduce 
set $S_{[1,z-1]} \coloneqq V \setminus S_{[z,T]}$ of jobs that are processed
before $z$. It remains to verify (by inspecting the
functions $f_{z-1}$ and $\dual{f}_{T- z-1}$) that jobs $S_{[1,z-1]}$ can be
processed in the first $z-1$ timeslots and jobs $S_{[z+1,T]}$ can be processed
within the  $T-z-1$ last timeslots. This concludes the description of the algorithm
and proof of Theorem~\ref{thm:baseline2}.



\subsection{An $\Os(\#AC\cdot \binom{n}{m})$ algorithm using Dynamic Programming}
\label{subsec:DP}

The natural Dynamic Program for the problem is as follows. We emphasize that this algorithm is folklore (for example it was also mentioned in~\cite{JansenLK16} and~\cite{nederlof2022fine}).

\begin{theorem}\label{thm:DP}
Let $\AC$ denote the number of different antichains of $G$. Then \sched can be solved in time $\Os( \AC \cdot \binom{n}{m})$.
\end{theorem}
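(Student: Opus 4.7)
The plan is to design a dynamic program indexed by ancestor-closed subsets of jobs. Call $X \subseteq V$ \emph{ancestor-closed} if $\Pred[X]=X$; for each such $X$, let $f(X)$ be the minimum makespan of an $m$-schedule of the subinstance induced by $X$, so the desired answer is $f(V)$. The key structural fact driving the bound on the number of DP states is that the map $X \mapsto \Sinks(X)$ (i.e.\ the maximal elements of $X$) is a bijection between ancestor-closed subsets of $V$ and antichains of $G$, with inverse $A \mapsto \Pred[A]$. Indeed, $\Pred[A]$ is ancestor-closed because $\Pred$ is transitively closed, and its set of maximal elements is exactly $A$: any $v \in \Pred[A] \setminus A$ lies strictly below some $u \in A$ and so cannot be maximal. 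Consequently the DP has precisely $\AC$ states.

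For the recurrence I take $f(\emptyset)=0$ and, for every nonempty ancestor-closed $X$,
\[
    f(X) \;=\; 1 \;+\; \min_{\substack{Y \subseteq \Sinks(X)\\ 1 \le |Y| \le m}} f(X \setminus Y).
\]
Correctness: in any schedule of $G[X]$ the jobs processed at the last time slot form a set $Y$ whose members have no successor in $X$ (so $Y \subseteq \Sinks(X)$) with $|Y| \le m$; conversely any such $Y$ is automatically an antichain (two sinks of $X$ are incomparable) and $X \setminus Y$ is again ancestor-closed, so composing an optimal schedule of $X \setminus Y$ with $Y$ placed at the last time slot yields a valid $m$-schedule of $G[X]$.

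For the execution I first enumerate all $\AC$ antichains of $G$ via the standard branching procedure already described in the proof of Theorem~\ref{thm:baseline2} (running in $\Ot(\AC)$ time); by the bijection this produces every ancestor-closed $X$ together with $\Sinks(X)$. I store the $f$-values in a hash table keyed by a canonical encoding of each state, so lookups cost polynomial time, and I process the states in order of increasing $|X|$ to ensure that every $f(X \setminus Y)$ consulted is already available. For each state $X$ I iterate over the $Y \subseteq \Sinks(X)$ with $|Y| \le m$ and apply the recurrence; the number of such subsets is at most $\binom{|\Sinks(X)|}{\le m} \le \binom{n}{\le m} = \Os(\binom{n}{m})$. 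Hence the total running time is $\Os(\AC \cdot \binom{n}{m})$.

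The only conceptual ingredients are the bijection between ancestor-closed sets and antichains and the observation that removing a subset of sinks preserves ancestor-closedness; everything else is routine bookkeeping. The one mildly quantitative point is bounding $\binom{n}{\le m}$ by $\Os(\binom{n}{m})$, which is immediate in the regime $m \le n/2$ in which this algorithm is invoked in Figure~\ref{tikz:MainThmProof}.
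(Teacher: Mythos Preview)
Your proof is correct and follows essentially the same dynamic-programming approach as the paper: both index states by antichains (equivalently, ancestor-closed sets via the bijection you spell out) and transition by removing a subset of at most $m$ sinks. The only cosmetic difference is that you compute the minimum makespan $f(X)$ directly, whereas the paper stores a boolean $\DP_t[B]$ indexed additionally by time $t \in [0,T]$; this saves a harmless factor of $n$ that is absorbed into $\Os$.
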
 

\begin{proof} 

    Our algorithm is based on dynamic programming.
    For every antichain $B \subseteq V$ of graph $G = (V,A)$ and integer $t \in [0,T]$ we define the
    states of dynamic programming $\DP_t[B]$ as follows:
    \begin{displaymath}
        \DP_t[B] \coloneqq \begin{cases} 1 &\text{ if jobs of }
        \Pred[B] \text{ can be scheduled within the first $t$ timeslots}, \\ 0 &\text{ otherwise}.
    \end{cases}
    \end{displaymath}
    Clearly, $\DP_0[\emptyset] =1$ and $\DP_0[B] = 0$ for any nonempty antichain $B$. 
    We use the following recurrence relation to compute the subsequent entries of dynamic
    programming table for every $t$ from $1$ to $T$: 

    \begin{displaymath}
        \DP_{t}[B] = \max_{X \subseteq B} \Big( \DP_{t-1}[\Sinks(\Pred[B] \setminus X)] \Big).
    \end{displaymath}

    We show correctness of the recurrence above.
    First, note that $\Sinks(\Pred[B] \setminus X)$ is always an antichain as it is a set of sinks, which are by definition incomparable. Furthermore, $\Pred[\Sinks(\Pred[B] \setminus X)] = \Pred[B]\setminus X$. 
    Now assume $\sigma$ is a schedule that processes the jobs in $\Pred[B]$ of 
    makespan $t$. Then at time $t$ the only jobs from $\Pred[B]$ that can be 
    processed are the jobs from $B$ itself; they are the sinks of $\sigma$. 
    Let $X = S_t$, then there is a schedule $\sigma'$ that can process 
    $\Pred[B]\setminus X$ in time $t-1$. Hence, $\DP_{t-1}[\Sinks(\Pred[B] 
    \setminus X)]=1$ and so $\DP_{t}[B]=1$. 
    
    For the other direction, assume that for an antichain $B$, $X\subseteq B$ 
    and $t \in [T]$ we find $\DP_{t-1}[\Sinks(\Pred[B] \setminus X)]=1$. Then 
    we also find that there is a schedule for $\Pred[B]$ with makespan $t$: 
    take the schedule for $\Pred[B] \setminus X$ and process $X$ at timeslot $t$. 
    Because $B$ is an antichain, all jobs in $X$ are incomparable. 
    Furthermore, all predecessors of jobs in $X$ were already processed before 
    $t$. This concludes the proof of correctness.

    As for the runtime, observe there there are $\Oh(n \cdot \AC)$ entries in the
    table $\DP$ and number of possibilities for $X \subseteq B$ is
    $\binom{n}{m}$. Moreover all the anti-chains of $G$ can be computed in
    $\Ot(\AC)$ time (see Section~\ref{sec:exact}).
\end{proof}

\subsection{Combining all parts}
\label{subsec:proofmain}
It remains to prove Theorem~\ref{thm:mainthm}, i.e. give an algorithm that solves \sched in $\Oh(1.995^n)$ time. To do this, we first need the following claim that follows from Dilworth's Theorem. 

\begin{claim}\label{claim:nrantichains}
    Let $G$ be a poset with $n$ vertices. If the minimum vertex cover of its
    comparability graph $\Gc$ has size at least $(1-\alpha) n$ for some constant $\alpha \in (0,1)$,
    then
    \begin{displaymath}
        \AC(G) \le \left( 1 + \frac{1}{\alpha} \right)^{\alpha n}
        .
    \end{displaymath}
\end{claim}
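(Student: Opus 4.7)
The plan is to translate the hypothesis on the vertex cover of $\Gc$ into a bound on the width of the poset, then use Dilworth's theorem to get a chain decomposition, and finally bound $\AC(G)$ by counting how antichains interact with a chain partition.

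First I would observe that an independent set in $\Gc$ is exactly a set of pairwise incomparable elements, i.e.\ an antichain of $G$. Since the minimum vertex cover and maximum independent set of $\Gc$ are complementary, the width $a$ of $G$ (the size of the largest antichain) is at most $n - (1-\alpha)n = \alpha n$. By Dilworth's theorem, $V$ can then be partitioned into $k \le \alpha n$ chains $C_1,\ldots,C_k$ with $\sum_{i=1}^k |C_i| = n$.

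Next, since a chain and an antichain can share at most one vertex, every antichain of $G$ is determined by choosing from each $C_i$ either a single element or no element. This yields the bound
\[
\AC(G) \;\le\; \prod_{i=1}^{k} (|C_i|+1).
\]
Taking logarithms, the right-hand side is a concave function of $(|C_1|,\ldots,|C_k|)$, so for fixed $k$ and fixed sum $n$ it is maximized when all $|C_i| = n/k$, giving $(n/k+1)^k$.

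It remains to check that $(n/k+1)^k$ is increasing in $k$, so that the worst case is $k = \alpha n$ (yielding exactly $(1+1/\alpha)^{\alpha n}$). Setting $x = n/k$ and differentiating $k\ln(n/k+1)$ with respect to $k$ gives $\ln(x+1) - \tfrac{x}{x+1}$, which is nonnegative for $x \ge 0$ by the standard inequality $\ln(1+x) \ge x/(x+1)$. Hence for $k \le \alpha n$,
\[
\AC(G) \;\le\; \left(\frac{n}{k}+1\right)^{k} \;\le\; \left(\frac{n}{\alpha n}+1\right)^{\alpha n} \;=\; \left(1+\frac{1}{\alpha}\right)^{\alpha n},
\]
as required. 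The only mildly technical step is the monotonicity of $(n/k+1)^k$, but this follows from a one-line calculus check; everything else is immediate from Dilworth and the chain--antichain incidence bound.
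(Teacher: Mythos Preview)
Your proof is correct and follows essentially the same route as the paper: translate the vertex-cover bound into a width bound, apply Dilworth to get a chain partition, bound $\AC(G)$ by $\prod_i(|C_i|+1)$, and use AM--GM (equivalently, concavity of $\log$) to reach $(n/k+1)^k$. The only difference is that you explicitly justify the final monotonicity step $(n/k+1)^k \le (1+1/\alpha)^{\alpha n}$ via a calculus check, whereas the paper simply asserts it.
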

\begin{proof}
    Assume that the size of minimum vertex cover of $\Gc$ is at least
    $(1-\alpha) n$. By duality, $\Gc$ has an maximum independent $I$ set of size
    at most $\alpha n$.  Because there are no edges in $\Gc[I]$, the set $I$ is
    an antichain in $G$.  Next, we use the Dilworth's
    Theorem~\cite{dilworth2009decomposition} that states the graph $G$ can be
    decomposed into $\ell \le |I| = \alpha n$ chains $C_1,\dots,C_\ell$.  

    Observe that every antichain can be succinctly described by either (i)
    selecting one of its vertex, or (ii) deciding to select none. Hence
    $\AC(G) \le \prod_{i=1}^{\ell} (|C_i|+1)$.
    Next, we use the AM-GM inequality. We get that:
    \begin{displaymath}
        \prod_{i=1}^{\ell} (|C_i|+1) \le \left(\frac{\sum_{i=1}^\ell (|C_i| + 1)}{\ell}\right)^\ell
    \end{displaymath}
    Observe that $\sum_{i=1}^\ell |C_i| = n$. Hence
    $\AC(G) \le (n/\ell+1)^{\ell} \le \left( 1+ \frac{1}{\alpha} \right )^{\alpha n}$.
\end{proof}

We note that Claim~\ref{claim:nrantichains} is tight, as $G$ could simply
consist of $\alpha n$ chains each of length $1/\alpha$.  

We are now ready to prove our main Theorem. See Figure~\ref{tikz:MainThmProof} for an overview of
the algorithm.

\begin{proof}[Proof of Theorem~\ref{thm:mainthm}]
First, we compute
the vertex cover $C$ of the comparability graph. This step can be done in
$\Os(1.3^n)$ (see~\cite{chen2010improved}). 

If $|C|\le \frac{n}{7.5}$, we observe that Theorem~\ref{thm:VC} is
fast enough as $169^{|C|} < 1.995^n$. 
Hence 
we can assume that the vertex cover is large, i.e. $|C|> \frac{n}{7.5}$. Claim~\ref{claim:nrantichains} then guarantees that the number of antichains is $\AC \le \Oh(1.9445^n)$.
For that case, we propose two algorithms based on the number of machines. 

When
the number of machines $m\le n/258$, we use the standard the dynamic programming from Subsection~\ref{subsec:DP} that runs in $\Os(\AC \cdot \binom{n}{m})$ time. As for $m \le n/258$, we can bound $\binom{n}{m} \le 1.0257^n$, we find that this is fact enough.  

In the
remaining case $m > n/258$, we apply the modified Fast Subset
Convolution algorithm described in Subsection~\ref{subsec:fastersubset}, running in
$\Os(\AC + 2^{n-m})$. This is fast enough because $m > n/258$.
This concludes the proof.

\end{proof}

\section{Conclusion and Further Research}
\label{sec:conc}

In this paper, we analyse \sched from the perspective of exact exponential time
algorithms. We break the $2^n$ barrier 
by presenting a $\Oh(1.995^n)$ time algorithm for \sched. This result is based
on a tradeoff between the number of antichains of the input graph and the size
of the vertex cover of its comparability graph. Our main technical contribution
is a $\Os(169^{|C|})$ time algorithm where $C$ is a vertex cover of the
comparability graph. To achieve this, we extend the techniques introduced by Dolev and Warmuth~\cite{dolev1984scheduling}. 

It would be interesting to improve our main theorem for a fixed number of
machines. Since $Pm|\text{prec}, p_j=1 | C_{\max}$ is not known to be $\mathsf{NP}$-complete for fixed $m$, 
one might even aim for subexponential time algorithms.
Even for $m=3$, this would be a breakthrough. 

We note that fixed-parameter tractable algorithms for non-trivial parameterizations are rare in the field of scheduling problems (see, e.g., survey by~\cite{mnich2018parameterized}).
The constant $169$ in the base of the exponent is relatively large and any
improvement to it would ultimately lead to a faster algorithm for \sched. We
believe that even reducing the runtime below $\Os(10^{|C|})$ requires a
significantly new insight into the problem. Note however that even if one could
somehow assume that $\AC \approx 1.1^n$ the current best algorithms from
Section~\ref{sec:exact} would guarantee only  $\Oh(1.993^n)$ time algorithm.
To improve our algorithm below $\Oh(1.9^n)$ one likely needs completely new
ideas.



Another interesting approach would be to find fixed-parameter tractable algorithms for other parameters. One such parameter is $h$, the height of the input graph. Even for fixed height, \sched is $\mathsf{NP}$-hard. However, for fixed number of machines, the problem is in $\mathsf{XP}$ when parameterized by the height, thanks to the algorithm of Dolev and Warmuth~\cite{dolev1984scheduling}. We wonder whether a fixed-parameter tractable algorithm is also possible, even for $m=3$.

Finally, while there is ample evidence that no $2^{o(n)}$ time algorithm exists for \sched, it remains a somewhat embarrassing open problem to show that such an algorithm would violate the Exponential Time Hypothesis.

\begin{picture}(0,0)
\put(250,-30)
{\hbox{\includegraphics[width=40px]{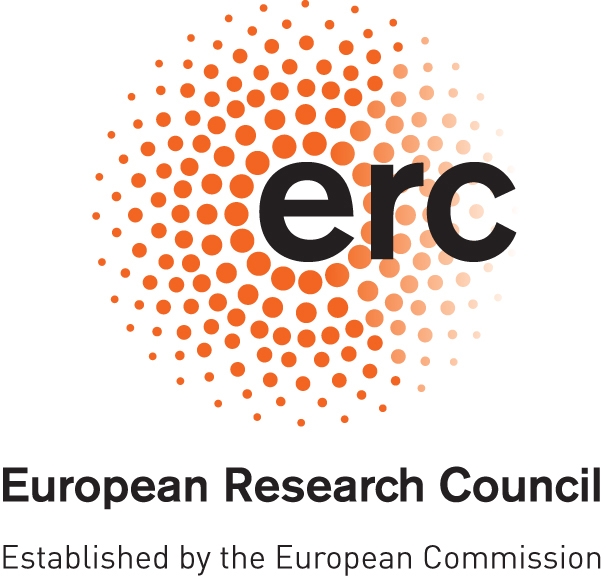}}}
\put(300,-50)
{\hbox{\includegraphics[width=60px]{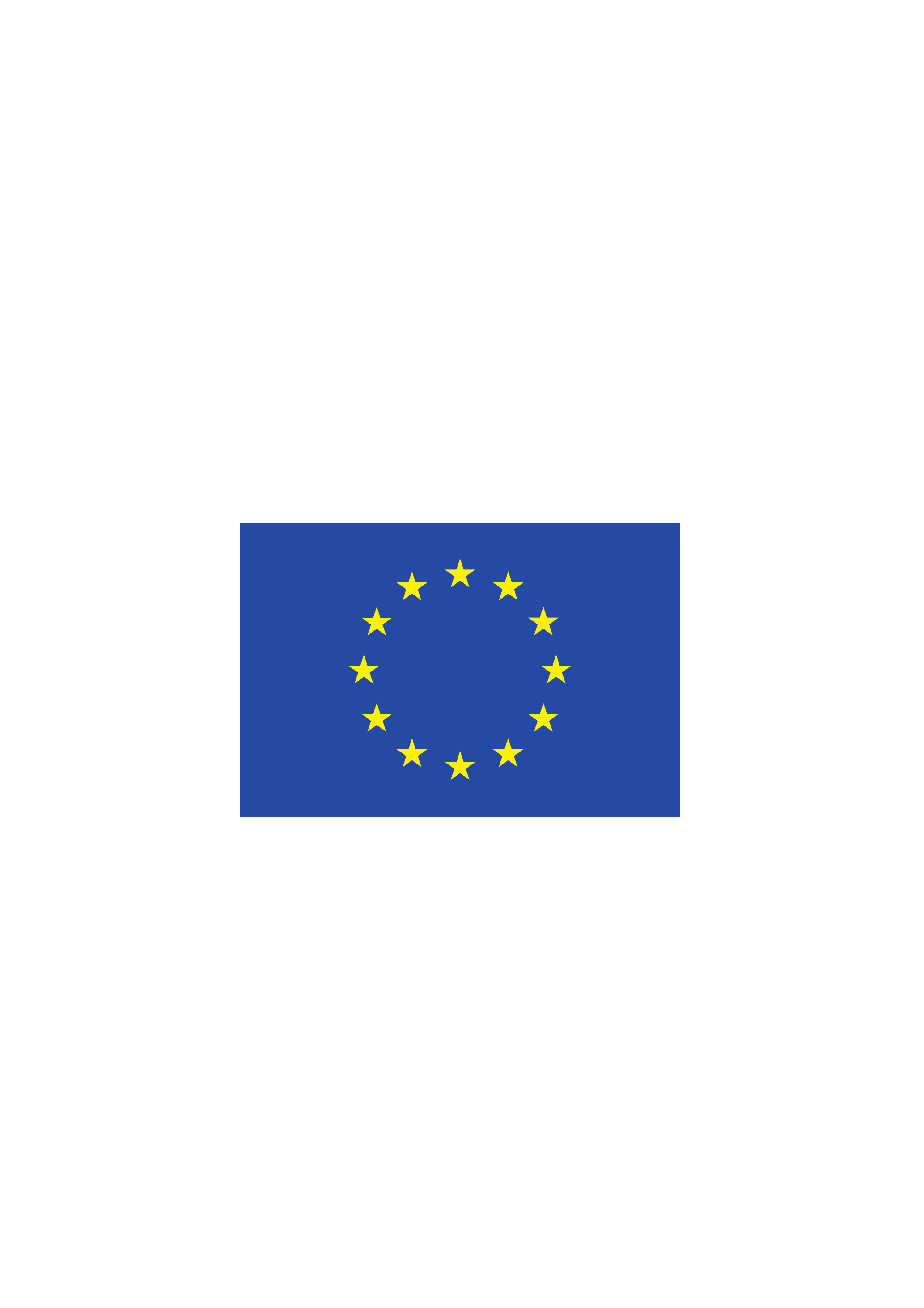}}}
\end{picture}





\bibliographystyle{plain}
\bibliography{references}

\begin{thebibliography}{10}

\bibitem{bansal2017scheduling}
Nikhil Bansal.
\newblock Scheduling open problems: Old and new.
\newblock {\em MAPSP 2017}, 2017.

\bibitem{bessy2019parameterized}
St{\'e}phane Bessy and Rodolphe Giroudeau.
\newblock Parameterized complexity of a coupled-task scheduling problem.
\newblock {\em Journal of Scheduling}, 22(3):305--313, 2019.

\bibitem{bjorklund2014determinant}
Andreas Bj{\"{o}}rklund.
\newblock Determinant sums for undirected hamiltonicity.
\newblock {\em SIAM Journal on Computing}, 43(1):280--299, 2014.

\bibitem{BjorklundHKK09}
Andreas Bj{\"{o}}rklund, Thore Husfeldt, Petteri Kaski, and Mikko Koivisto.
\newblock Counting paths and packings in halves.
\newblock In Amos Fiat and Peter Sanders, editors, {\em Algorithms - {ESA}
  2009, 17th Annual European Symposium, Copenhagen, Denmark, September 7-9,
  2009. Proceedings}, volume 5757 of {\em Lecture Notes in Computer Science},
  pages 578--586. Springer, 2009.

\bibitem{DBLP:journals/orl/BodlaenderF95}
Hans~L. Bodlaender and Michael~R. Fellows.
\newblock $\mathsf{W}$[2]-hardness of precedence constrained $k$-processor
  scheduling.
\newblock {\em Operations Research Letters}, 18(2):93--97, 1995.

\bibitem{bodlaender2021parameterized}
Hans~L. Bodlaender, Carla Groenland, Jesper Nederlof, and C{\'e}line~M.F.
  Swennenhuis.
\newblock {P}arameterized {P}roblems {C}omplete for {N}ondeterministic {FPT}
  time and {L}ogarithmic {S}pace.
\newblock In {\em 62nd {IEEE} Annual Symposium on Foundations of Computer
  Science, {FOCS} 2021, Denver, CO, USA, February 7-10, 2022}, pages 193--204,
  2021.

\bibitem{chen2010improved}
Jianer Chen, Iyad~A. Kanj, and Ge~Xia.
\newblock Improved upper bounds for vertex cover.
\newblock {\em Theoretical Computer Science}, 411(40-42):3736--3756, 2010.

\bibitem{coffman1972optimal}
Edward~G. Coffman and Ronald~L. Graham.
\newblock Optimal scheduling for two-processor systems.
\newblock {\em Acta informatica}, 1(3):200--213, 1972.

\bibitem{cygan2014scheduling}
Marek Cygan, Marcin Pilipczuk, Micha{\l} Pilipczuk, and Jakub~Onufry
  Wojtaszczyk.
\newblock Scheduling partially ordered jobs faster than {$2^n$}.
\newblock {\em Algorithmica}, 68(3):692--714, 2014.

\bibitem{dilworth2009decomposition}
Robert~P. Dilworth.
\newblock A decomposition theorem for partially ordered sets.
\newblock In {\em Classic Papers in Combinatorics}, pages 139--144. Springer,
  2009.

\bibitem{dolev1984scheduling}
Danny Dolev and Manfred~K. Warmuth.
\newblock Scheduling precedence graphs of bounded height.
\newblock {\em Journal of Algorithms}, 5(1):48--59, 1984.

\bibitem{fujii1969optimal}
M.~Fujii, T.~Kasami, and K.~Ninomiya.
\newblock Optimal sequencing of two equivalent processors.
\newblock {\em SIAM Journal on Applied Mathematics}, 17(4):784--789, 1969.

\bibitem{Gabow:82:An-almost-linear-algorithm}
Harold~N. Gabow.
\newblock An almost-linear algorithm for two-processor scheduling.
\newblock {\em J. Assoc. Comput. Mach.}, 29(3):766--780, 1982.

\bibitem{gabow1985linear}
Harold~N. Gabow and Robert~Endre Tarjan.
\newblock A linear-time algorithm for a special case of disjoint set union.
\newblock {\em Journal of computer and system sciences}, 30(2):209--221, 1985.

\bibitem{GareyJ79}
Michael~R. Garey and David~S. Johnson.
\newblock {\em Computers and Intractability: {A} Guide to the Theory of
  $\mathsf{NP}$-Completeness}.
\newblock W. H. Freeman, 1979.

\bibitem{garey1983scheduling}
Michael~R. Garey, David~S. Johnson, Robert~E. Tarjan, and Mihalis Yannakakis.
\newblock Scheduling opposing forests.
\newblock {\em SIAM Journal on Algebraic Discrete Methods}, 4(1):72--93, 1983.

\bibitem{garg2017quasi}
Shashwat Garg.
\newblock {Q}uasi-{PTAS} for {S}cheduling with {P}recedences using {LP}
  {H}ierarchies.
\newblock In {\em 45th International Colloquium on Automata, Languages, and
  Programming (ICALP 2018)}. Schloss Dagstuhl-Leibniz-Zentrum fuer Informatik,
  2018.

\bibitem{pasin}
Surbhi Goel, Adam Klivans, Pasin Manurangsi, and Daniel Reichman.
\newblock {Tight Hardness Results for Training Depth-2 ReLU Networks}.
\newblock In James~R. Lee, editor, {\em 12th Innovations in Theoretical
  Computer Science Conference (ITCS 2021)}, volume 185 of {\em Leibniz
  International Proceedings in Informatics (LIPIcs)}, pages 22:1--22:14,
  Dagstuhl, Germany, 2021. Schloss Dagstuhl--Leibniz-Zentrum f{\"u}r
  Informatik.

\bibitem{graham1969bounds}
Ronald~L. Graham.
\newblock Bounds on multiprocessing timing anomalies.
\newblock {\em SIAM journal on Applied Mathematics}, 17(2):416--429, 1969.

\bibitem{hu1961parallel}
Te~C. Hu.
\newblock Parallel sequencing and assembly line problems.
\newblock {\em Operations research}, 9(6):841--848, 1961.

\bibitem{JansenLK16}
Klaus Jansen, Felix Land, and Maren Kaluza.
\newblock {P}recedence {S}cheduling with {U}nit {E}xecution {T}ime is
  {E}quivalent to {P}arametrized {B}iclique.
\newblock In Rusins~Martins Freivalds, Gregor Engels, and Barbara Catania,
  editors, {\em {SOFSEM} 2016: Theory and Practice of Computer Science - 42nd
  International Conference on Current Trends in Theory and Practice of Computer
  Science, Harrachov, Czech Republic, January 23-28, 2016, Proceedings}, volume
  9587 of {\em Lecture Notes in Computer Science}, pages 329--343. Springer,
  2016.

\bibitem{scheduling-practical}
Safia Kedad-Sidhoum, Florence Monna, and Denis Trystram.
\newblock Scheduling tasks with precedence constraints on hybrid multi-core
  machines.
\newblock In {\em 2015 IEEE International Parallel and Distributed Processing
  Symposium Workshop}, pages 27--33. IEEE, 2015.

\bibitem{lawler1993sequencing}
Eugene~L. Lawler, Jan~Karel Lenstra, Alexander H.G.~Rinnooy Kan, and David~B.
  Shmoys.
\newblock Sequencing and scheduling: Algorithms and complexity.
\newblock {\em Handbooks in operations research and management science},
  4:445--522, 1993.

\bibitem{scheduling-practical2}
Young~Choon Lee and Albert~Y. Zomaya.
\newblock Minimizing energy consumption for precedence-constrained applications
  using dynamic voltage scaling.
\newblock In {\em 2009 9th IEEE/ACM International Symposium on Cluster
  Computing and the Grid}, pages 92--99. IEEE, 2009.

\bibitem{lenstra1978complexity}
Jan~Karel Lenstra and Alexander~H.G. Rinnooy~Kan.
\newblock Complexity of scheduling under precedence constraints.
\newblock {\em Operations Research}, 26(1):22--35, 1978.

\bibitem{levey-rothvoss}
Elaine Levey and Thomas Rothvo{\ss}.
\newblock A (1+epsilon)-{A}pproximation for {M}akespan {S}cheduling with
  {P}recedence {C}onstraints {U}sing {LP} {H}ierarchies.
\newblock {\em {SIAM} J. Comput.}, 50(3), 2021.

\bibitem{li2021towards}
Shi Li.
\newblock Towards {PTAS} for precedence constrained scheduling via
  combinatorial algorithms.
\newblock In {\em Proceedings of the 2021 ACM-SIAM Symposium on Discrete
  Algorithms (SODA)}, pages 2991--3010. SIAM, 2021.

\bibitem{mnich2018parameterized}
Matthias Mnich and Ren{\'e} van Bevern.
\newblock Parameterized complexity of machine scheduling: 15 open problems.
\newblock {\em Computers \& Operations Research}, 2018.

\bibitem{nederlof2021faster}
Jesper Nederlof, Jakub Pawlewicz, C{\'e}line~M.F. Swennenhuis, and Karol
  W{\k{e}}grzycki.
\newblock A faster exponential time algorithm for bin packing with a constant
  number of bins via additive combinatorics.
\newblock In {\em Proceedings of the 2021 ACM-SIAM Symposium on Discrete
  Algorithms (SODA)}, pages 1682--1701. SIAM, 2021.

\bibitem{nederlof2022fine}
Jesper Nederlof and C{\'e}line~M.F. Swennenhuis.
\newblock On the fine-grained parameterized complexity of partial scheduling to
  minimize the makespan.
\newblock {\em Algorithmica}, pages 1--26, 2022.

\bibitem{papadimitriou1979scheduling}
Christos~H. Papadimitriou and Mihalis Yannakakis.
\newblock Scheduling interval-ordered tasks.
\newblock {\em SIAM Journal on Computing}, 8(3):405--409, 1979.

\bibitem{Sethi:76:Scheduling-graphs}
Ravi Sethi.
\newblock Scheduling graphs on two processors.
\newblock {\em SIAM J. Comput.}, 5(1):73--82, 1976.

\bibitem{scheduling-pracitcal3}
Mohsen Sharifi, Saeed Shahrivari, and Hadi Salimi.
\newblock {PASTA}: a power-aware solution to scheduling of
  precedence-constrained tasks on heterogeneous computing resources.
\newblock {\em Computing}, 95(1):67--88, 2013.

\bibitem{ullman1975np}
Jeffrey~D. Ullman.
\newblock $\mathsf{NP}$-complete scheduling problems.
\newblock {\em Journal of Computer and System sciences}, 10(3):384--393, 1975.

\end{thebibliography}
\appendix
\section{Lower Bound} 
\label{sec:LB}

Lenstra and Rinnooy Kan~\cite{lenstra1978complexity} proved $\mathsf{NP}$-hardness of \sched. They
reduced from an instance of \textsc{Clique}  with $n$ vertices to an instance of
\sched with $\Oh(n^2)$ jobs. Upon a close inspection their reduction gives
$2^{\Omega(\sqrt{n})}$ lower bound (assuming the Exponential Time Hypothesis). 
Jansen, Land and Kaluza~\cite{JansenLK16} improve this to $2^{\Omega(\sqrt{n \log n})}$. To the best of our knowledge this the currently best lower bound based on the Exponential Time Hypothesis. They also show that a $2^{o(n)}$ time algorithm for \sched would imply a $2^{o(n)}$ time algorithm for the Biclique problem on graphs on $n$ vertices.

We modify the reduction form~\cite{lenstra1978complexity} and start from an instance of \textsc{Densest
$\kappa$-Subgraph} on sparse graphs. 

In the \textsc{Densest $\kappa$-Subgraph} problem (\DKS), we are given a graph $G
= (V,E)$ and a positive integer $\kappa$. The goal is to select a subset $S
\subseteq V$ of $\kappa$ vertices that induce as many edges as possible. We use
$\den(G)$ to denote $\max_{S \subseteq V, |S| = \kappa} |E(S)|$, i.e. the optimum of
\DKS. Recently, Goel et al.~\cite{pasin} formulated the following Hypothesis
about the hardness of \DKS.

\begin{hypothesis}[\cite{pasin}]
    \label{hyp:dks}
    There exists $\delta > 0$ and $\Delta \in \nat$ such that the following holds.
    Given an instance $(G,\kappa,\ell)$ of
    \DKS, where each one of $N$ vertices of graph $G$ has degree at most $\Delta$,
    no $\Oh(2^{\delta N})$ time algorithm can
    decide if $\den(G) \ge \ell$.
\end{hypothesis}

In fact Goel et al.~\cite{pasin} formulated much stronger hypothesis about a
hardness of approximation of \DKS. \cref{hyp:dks} is a special case of
\cite[Hypothesis 1]{pasin} with $C = 1$.  Now we exclude $2^{o(n)}$ time
algorithm for \sched assuming~\cref{hyp:dks}. To achieve this we modify
the $\mathsf{NP}$-hardness reduction of~\cite{lenstra1978complexity}.

\begin{theorem}
    \label{thm:lb-sched}
    There is no algorithm that solves \sched in $2^{o(n)}$ time assuming~\cref{hyp:dks}.
\end{theorem}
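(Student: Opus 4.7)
My plan is to adapt the classical $\mathsf{NP}$-hardness reduction of Lenstra and Rinnooy Kan~\cite{lenstra1978complexity} from \textsc{Clique} to \sched, but to start instead from the bounded-degree \DKS instances guaranteed hard by Hypothesis~\ref{hyp:dks}, and to observe that in this restricted setting the reduction incurs only a linear blow-up in the instance size.

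Concretely, given an instance $(G=(V,E),\kappa,\ell)$ of \DKS with $|V|=N$ vertices and maximum degree $\Delta$, I would build a \sched instance with target makespan $T=3$ consisting of (i) a \emph{vertex job} $v$ for each $v\in V$, (ii) an \emph{edge job} $x_e$ for each $e\in E$ with precedence constraints $u\prec x_e$ and $w\prec x_e$ whenever $e=\{u,w\}$, and (iii) a linear number of \emph{dummy jobs} organized into short precedence chains (3-chains $a\prec b\prec c$, together with 2-chains of both orientations) used to reserve capacity in specified time slots. The number of machines is set to $m=\max(\kappa,\;N-\kappa+\ell,\;|E|-\ell)$, and the dummies are tuned so that each of the three time slots is completely filled precisely when a sufficiently dense $\kappa$-subgraph exists.

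Correctness mirrors the classical \textsc{Clique} argument. Only sources can execute at time $1$, so the vertex jobs placed at time $1$ correspond to some $S\subseteq V$ with $|S|=\kappa$. An edge job $x_e$ can be scheduled at time $2$ only if both endpoints of $e$ lie in $S$, and the dummy-imposed capacity bound at time $3$ forces at least $\ell$ edge jobs into time $2$; therefore a feasible makespan-$3$ schedule exists iff $|E(S)|\ge\ell$. Conversely, any $S$ with $|E(S)|\ge\ell$ yields a valid schedule by placing $S$ at time $1$, the remaining vertex jobs together with $\ell$ edges of $E(S)$ at time $2$, and the remaining edges at time $3$, padded by dummy chains throughout.

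The crucial observation is the instance size. Because $\Delta$ is a constant, $|E|\le\Delta N/2=O(N)$, hence $m=O(N)$ and the total number of jobs is $n=N+|E|+O(m)=O(N)$. A $2^{o(n)}$-time algorithm for \sched would thus give a $2^{o(N)}$-time algorithm for bounded-degree \DKS, contradicting Hypothesis~\ref{hyp:dks}. The main obstacle I foresee is the bookkeeping for the dummy gadgets: the triple of desired slot-occupancies $(m-\kappa,\,m-(N-\kappa)-\ell,\,m-(|E|-\ell))$ is not realizable by every nonnegative integer combination of 3-chains and 2-chains, so I would either inflate $m$ by a constant factor to create the slack needed to solve the underlying small linear system, or extend the gadget family with a few additional chain patterns. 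Either fix preserves the linear bound $n=O(N)$ and therefore the lower bound.
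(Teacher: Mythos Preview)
Your reduction is essentially the paper's: start from bounded-degree \DKS, build vertex jobs, edge jobs with the two precedence arcs, add fillers, and ask for makespan $T=3$; the linear size bound comes from $|E|\le \Delta N/2$. The correctness argument you sketch is the same as the paper's.

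The one place you diverge is exactly the obstacle you flag, and the paper's fix is cleaner than juggling 2-chains and 3-chains. Instead of disjoint chains, the paper uses three \emph{layers} $L_1,L_2,L_3$ of filler jobs with \emph{complete bipartite} precedence between consecutive layers (every job of $L_1$ precedes every job of $L_2$, and likewise $L_2\prec L_3$). Any triple in $L_1\times L_2\times L_3$ is then a 3-chain, so the layers are rigidly pinned to time slots $1,2,3$, yet the three layer sizes can be chosen independently. Concretely the paper sets $m=2\Delta N+1$ (so everything below is nonnegative) and $|L_1|=m-\kappa$, $|L_2|=m+\kappa-\ell-N$, $|L_3|=m+\ell-|E|$, giving a total of exactly $3m$ jobs and forcing every slot to be full. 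This removes the linear-system bookkeeping entirely while keeping $n=3m=O(N)$.
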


\begin{proof}

    We reduce from an instance $(G,\kappa,\ell)$ of \DKS as
    in~\cref{hyp:dks}. We assume that graph $G$ does not contain isolated
    vertices (note that if any isolated vertex is part of the optimum solution to \DKS then
    an instance is trivial).
    We are promised that $G$ is $N$ vertices graph with $M \le \Delta N$ many
    edges for some constant $\Delta \in \nat$. Based on $(G,\kappa,\ell)$ we
    construct the instance of \sched as follows. 
    \begin{itemize}
        \item For each vertex $v \in V(G)$ create job $j_v^{(1)}$.
        \item For each edge $e = \{u,v\} \in E(G)$ create job $j_e^{(2)}$ with precedence constraints $j_u^{(1)} \prec j_e^{(2)}$ and  $j_v^{(1)} \prec j_e^{(2)}$.  
    \end{itemize}

    Next, we set the number of machines $m \coloneqq 2 \Delta N + 1$ and  create
    \emph{filler} jobs. Namely, we create three layers of jobs: Layer $L_1$
    consists of $m-\kappa$ jobs, layer $L_2$ consists of $m + \kappa - \ell - N$ jobs and
    layer $L_3$ consists of $m + \ell - M$ jobs. Finally, we set all the jobs
    in $L_1$ to be predecessors of every job in $L_2$ and all jobs in $L_2$ to
    be predecessors of $L_3$. This concludes the construction of the instance.
    At the end we invoke an oracle to \sched and declare that the $\den(G) \ge
    \ell$ if the makespan of the schedule is $T=3$.

    Now we argue that the constructed instance of \sched is equivalent to
    the original instance of \DKS.

    $\bold{(\Rightarrow)}$: Assume that an answer to \DKS is true and there
    exist set $S \subseteq V$ of $\kappa$ vertices that induce $\ge \ell$ edges.
    Then we can construct a schedule of makespan $3$ as follows. In the first
    timeslot take jobs $j^{(1)}_v$ for all $v \in S$ and all jobs from layer
    $L_1$. In the second timeslot take (i) jobs $j^{(1)}_u$ for all $v \in V
    \setminus S$, (ii) arbitrary set of $\ell$ jobs $j^{(2)}_e$ where $e =
    \{u,v\}$ and $u,v \in S$, and (iii) all the jobs from $L_2$. In the third
    timeslot take all the remaining jobs. Note that all precedence constraints
    are satisfied and the sizes of $L_1,L_2$ and $L_3$ are selected such that
    all of timeslots fit $\le m$ jobs.

    $\bold{(\Leftarrow)}$: Assume that there exists a schedule with makespan
    $3$. Because the total number of jobs $n$ is $3m$ every timeslot must be full,
    i.e., exactly $m$ jobs are scheduled in every timeslot.  Observe that jobs
    from from layers $L_1,L_2$ and $L_3$ must be processed consecutively in
    timeslots $1$, $2$ and $3$ because every triple in $L_1 \times L_2 \times
    L_3$ forms a chain with $3$ vertices.  Next, let $S \subseteq V$
    be the set of vertices such that jobs $j^{(1)}_s$ with $s \in S$ are processed in the
    first timeslot. Observe that (other than jobs from $L_1$) only $\kappa$ jobs
    of the form $j^{(1)}_v$ for some $v \in V$ can be processed in the first
    timeslot (as these are the only remaining sources in the graph). Now,
    consider a second timeslot. It must be filled by exactly $m$ jobs.  There is
    exactly $N - \kappa$ jobs of the form form $j^{(1)}_v$ for $v \in V \setminus S$
    and exactly $m - \ell - (N - \kappa)$ jobs in $L_2$. Therefore, $\ell$ jobs
    of the form $j^{(2)}_e$ for some $e \in E(G)$ must be scheduled in second
    timeslot. These jobs correspond to the edges of $G$
    with both endpoints in $S$.  Hence $\den(G) \ge \ell$.

    This concludes the equivalence between the instances. For the running time 
    observe that the number of jobs $n$ in the constructed instance is $3 m$.
    This is $\Oh(N)$ because $\Delta$ is constant. Hence an algorithm that runs in
    $2^{o(n)}$ time and solves \sched contradicts~\cref{hyp:dks}.
\end{proof}

\end{document}